\documentclass[leqno,11pt]{article}
\usepackage{graphicx}
\usepackage{epsfig}
\usepackage[english]{babel}
\usepackage{amsfonts,amsmath,enumerate}
\usepackage{amscd}
\usepackage{amsmath,amsfonts,amssymb,mathrsfs}

\usepackage{latexsym}
\usepackage{stmaryrd}
\usepackage{amssymb}
\usepackage{theorem}
\usepackage{color}

\setlength{\textwidth}{16 cm}
\setlength{\oddsidemargin}{-0.01cm}
\setlength{\topmargin}{-0.1cm}
\setlength{\textheight}{22 cm}
\setlength{\parskip}{\medskipamount}
\setlength{\unitlength}{1cm}
\newlength{\oldparindent}
\setlength{\oldparindent}{\parindent}

\def\lbr{[\![}
\def\rbr{]\!]}

\title{\Large \bf  Portfolio optimization with insider's initial information and  counterparty risk   \thanks{ 
This research is part of a project of Europlace Institute of Finance.  We thank Laurent Denis, Nicole El Karoui, Monique Jeanblanc, Huy\^en Pham, Abass Sagna, Nizar Touzi and Lioudmila Vostrikova for discussions.
} }
\author{\textsc{Caroline HILLAIRET\thanks{CMAP Ecole Polytechnique, Email: caroline.hillairet@polytechnique.edu.  Financial support by Chair {\it Financial Risks} of the {\it Risk
Foundation},  Chair {\it
Derivatives of the Future} sponsored by the {F\'ed\'eration Bancaire
Fran\c{c}aise},  Chair {\it Finance and Sustainable Development}
sponsored by EDF and Calyon. 
}\quad  Ying JIAO\thanks{LPMA Universit\'e Paris Diderot  and Peking University,  Email: jiao@math.univ-paris-diderot.fr. Financial support by Alma Recherche.  
}}\\
}

\date{\today}

\newtheorem{definition}{Definition }[section]
{Proposition }
{Lemma }%
{Theorem }
{Corollary }
{Remark }
\newtheorem{hypothese}[definition]%
{Assumption}

\theorembodyfont{\upshape}
{Example}

\newcommand{\esp}{\mathbb E} 
\newcommand{\indic}{1\!\!\!1} 
\newcommand{\proba}{\mathbb{P}}

\newcommand{\bF}{\mathbb{F}}

\newcommand{\R}{\mathbb{R}}
\newcommand{\D}{\mathcal{D}}
\newcommand{\F}{\mathcal{F}}

\newcommand{\G}{\mathcal{G}}
\newcommand{\bG}{\mathbb{G}}

\newcommand{\cA}{\mathcal{A}}
\newcommand{\cP}{\mathcal{P}}

\def\esssup_#1{\underset{#1}{\mathrm{ess\,sup\, }}}
\def\argmax_#1{\underset{#1}{\mathrm{arg\,max\, }}}
\def\essinf_#1{\underset{#1}{\mathrm{ess\,inf\, }}}

\theorembodyfont{\upshape\it}
\newtheorem{Thm}{\bf Theorem}[section]
\newtheorem{Pro}[Thm]{\bf Proposition}
\newtheorem{Lem}[Thm]{\bf Lemma}

\theorembodyfont{\upshape}

\newtheorem{Rem}[Thm]{Remark}


\newcommand {\proof} {\noindent {\sc Proof:  }}

\newcommand {\finproof} {\hfill $\Box$ \vskip 5 pt }

\begin{document}

\maketitle

\numberwithin{equation}{section}


\begin{abstract}

We study the gain of an insider having private information which concerns the default risk of a counterparty.  More precisely,  the default time $\tau$ is modelled as the first time  a stochastic process hits a random barrier $L$. The insider knows this barrier (as it can be the case for example for the manager of the counterparty), whereas standard investors only observe its value at the default time.  All investors aim to maximize the expected utility from terminal wealth, on  a financial market where the risky asset price  is exposed to a sudden loss at the default time of the counterparty.  In this framework, the insider's information is modelled by using an initial enlargement of filtration and $\tau$ is a stopping time with respect to this enlarged filtration. We prove that the regulator must impose short selling constraints for the insider, in order to exclude the value process to reach infinity.  We then solve the optimization problem and we study the gain of the insider,   theoretically and numerically. In general, the insider achieves a larger value of expected utility than  the standard investor. But in extreme situations for the default and loss risks,  a standard investor may in average  outperform the insider, by taking advantage of an aggressive short selling position which is not allowed for the insider, but at the risk of big losses if the default finally occurs after the maturity. 


\end{abstract}

\vspace*{0.5cm}

{\itshape Keywords :} asymmetric information, enlargement of filtrations, counterparty risk, 
optimal investment, duality, dynamic programming.

{\itshape MSC 2010 :} 60H30 \, 91B28\,  91G40\,  93E20

\newpage

\section{Introduction}
The insider's optimal investment is a classical problem where an investor possessing some extra flow of information aims to maximize the expected utility on the final value of her portfolio. As the insider has more information, she has access to a larger set of available trading strategies, leading to a higher expected utility from terminal wealth. In the literature, an interesting question has been studied: what is the cost of the extra information? From an indifference point of view, we search for the value at which the investor accepts to buy the information at the initial time, that is, the amount of money she is ready to pay such that this cost is offset by the increase of the maximal expected utility. This is the approach adopted by Amendinger et al. \cite{ABS}, where the authors study the value of an initial information in the setting of a complete default free market. The extra information they consider is a terminal information distorted by an independent noise, for example, a noisy signal of a functional of the final value of the assets. We adopt a more direct manner: we are interested in the gain of the insider from her investment strategy on the portfolio compared to other investors not having access to the extra information. The originality of  our paper is to study this problem in the context of credit risks: the insider's information concerns the default risk of a counterparty firm. 

During the financial crisis, the counterparty default has become an important source of risk we need to take into account. Jiao and Pham \cite{jp} have considered an optimal investment problem where the risky asset in the portfolio is subjected to the default risk of a counterparty firm and its value may suffer a sudden loss at the counterparty default time $\tau$. This paper is a good benchmark in our study in order to quantify the value of the extra information.  The accessible information for a standard investor is described as in the classical credit risk modelling by Bielecki and Rutkowski \cite{BR},   using the progressive enlargement of a reference ``default-free'' filtration $\bF=(\F_t)_{t\geq 0}$ by the default $\tau$. To analyze the impact of default, the default density framework developed in El Karoui et al. \cite{ejj} has been adopted. 

This current paper concentrates on an insider in comparison with a standard investor. Both agents can invest in the same risk-free asset and  risky one and they observe the same market price for each asset. However, the insider possesses more information on the risky asset  since it is influenced by the counterparty default on which  the insider has additional knowledge.  Due to the extra information, the insider may gain larger profit.  The insider's information is modelled by using an initial enlargement of filtration as in \cite{ABS} and in Grorud and Pontier \cite{Pon}.  More precisely, in the credit risk context, we model the default time $\tau$ as the first time that a stochastic process hits a random barrier $L$. The insider knows the barrier from the initial time  and the other investors only see its value at the default time. This extra information is called the insider's information, or the full information in Hillairet and Jiao \cite{nous}.

We shall consider the insider's optimization problem in parallel with the one studied in \cite{jp}. The canonical decomposition of processes adapted to the enlarged filtration induces to specify  the investment strategies  on the two sets: before-default one $\{t<\tau\}$ and  after-default one $\{t\geq\tau\}$, which is a similar point to \cite{jp}. However, due to the extra knowledge on the default barrier $L$, the insider's strategy depends on $L$ before the counterparty default, which is not the case for the standard investor. If the default occurs, the insider's strategy will depend on the default time $\tau$. From the methodology point of view, the main difference here is that for the insider, the default time is modelled as in the classical structural approach model since the random barrier $L$ is known, so that $\tau$ becomes a stopping time w.r.t. the reference filtration $\bF$. Therefore, the default density hypothesis, which is crucial in \cite{jp}, fails to hold for the insider and we can no longer adopt the conditional density approach in this situation. We apply the theory of initial enlargement of filtration, assuming that the conditional law of $L$ given $\F_t$ is equivalent to the law of $L$. The corresponding Radon Nikodym derivative process, $(p_t(.),t\geq 0)$ will play a  key role in our methodology.

The main observation of our study is that, if the short-selling is not regulated, then the insider can obtain unbounded terminal wealth. This justifies the necessity to consider the optimization problem with portfolio strategies where the short-selling is limited to a given level. We decompose the optimization problem as an after-default problem and a global before-default one, that we solve respectively by using the dual and the dynamic programming methods. 
 To make comparison with the standard investor in \cite{jp}, we choose to consider CRRA utility function.     

The paper is organized as follows. In section 2, we introduce the model for the counterparty default, and we define and compare the informational structure of an insider with respect to a standard investor. In Section 3 we present the insider's investment problem and we decompose it into an after-default and global before-default ones, using the Radon Nikodym derivative process.  We also prove the necessity of imposing short selling constraint for the insider to exclude the value process to reach infinity. In Section 4  we solve the two optimization problems: the after-default  one through duality methods in a default free complete market, and the global before-default one through dynamic programming approach. We perform theorical comparison of the value  process of the after-default optimization problem,  and for the global before-default optimization problem, the comparison is done in 
Section 5 through  numerical illustrations.

\section{Counterparty default model and information}

We first introduce the model for the counterparty default, which is a general and standard model in the credit risk analysis. Let us fix a probability space $(\Omega, \mathfrak A,\proba)$ equipped with   a  reference  filtration $\bF=(\F_t)_{t\geq 0}$ satisfying the usual conditions, which represents the ``default-free'' information.
 Let $\tau$ be a positive random time denoting the default time of the counterparty, which is not necessarily an $\bF$-stopping time.

\noindent $\bullet$ {\it The default model} 

We consider the default risk of the counterparty in a general barrier model.
Let $(\lambda_t,t\geq 0)$ be  a positive $\bF$-adapted process  representing the default intensity process of the  counterparty. {Denote by $\Lambda_t=\int_0^t\lambda_sds$. It is an increasing process.} We model the default time as the first passage time of the  process $\Lambda$ to a {positive} random barrier $L$, i.e.
\begin{equation}\tau=\inf\{t\geq 0: \Lambda_t\geq L\}
\end{equation}where the default threshold $L$ is a positive $\mathfrak A$-measurable random variable.
In the particular but widely used case of Cox process model, $L$ is independent of $\F_\infty$ and follows the uni-exponential law. In the case where $L$ is constant or deterministic, $\tau$ is an $\bF$-stopping time as in the classical structural default models. 

\noindent $\bullet$ {\it Information of the insider}

We suppose that,  besides the information on the ``default-free'' market, the insider has complete information on $L$:  this is the case for example of the counterparty firm's managers who determine the default threshold.   
This full information is modelled as the initial enlargement of the filtration $\bF$ by $L$ and denoted by $\bG^M=(\G_t^M)_{t\geq 0}, \,\G_t^M=\F_t\vee\sigma(L)$.  
Without loss of generality, we {assume} that all the filtrations we deal with in the following satisfy the usual conditions.

We suppose that a standard investor on the market observes  whether the default has occurred or not and if so, the default time $\tau$, together with the information  contained in the  filtration $\bF$. Mathematically, this information is represented by the progressive enlargement of filtration $\bF$ by $\tau$, or more precisely, by the filtration $\bG=(\G_t)_{t\geq 0}$ where
$\G_t=\F_t\vee\D_t, \,\D_t=\sigma(\indic_{\tau\leq s},s\leq t)$. This is the standard credit risk modeling for an market investor as in \cite{BR}.

The investor's information is included in the insider's information flow. We have $\G_t\subseteq\G_t^M$ for any $t\geq 0$. In fact, before the default $\tau$, i.e., on the set $\{t<\tau\}$, the insider has additional information on $L$, so her information $\G_t^M$ is {in general} strictly larger than $\G_t$. After the default occurs, both of them observe the default event and subsequently the value of  $L$ so that they have equal information flow.  

We recall  the canonical decomposition of $\bG^M$-adapted (respectively $\bG^M$-predictable) processes (see Jeulin  \cite{Jeu} Lemma 3.13 and 4.4).
\begin{Lem}\label{Lem:decomp}\begin{enumerate}\item For  $t\geq 0$, any $\G_t^M$-measurable random variable can be written in the form $Y_t=\indic_{\tau>t}Y_t^0(L)+\indic_{\tau\leq t}Y_t^1(\tau)$ where {$Y_t^0(\cdot)$ and $Y_t^1(\cdot)$} are $\F_t\otimes\mathcal B(\R_+)$-measurable.
\item   Any $\bG^M$-adapted process $Y$ admits the decomposition form $Y_t=\indic_{\tau > t}Y_t^0(L)+\indic_{\tau \leq t}Y_t^1(\tau)$ where {$Y^0(\cdot)$ and $Y^1(\cdot)$} are $\bF \otimes\mathcal B(\R_+)$-adapted\footnote{{Namely for any $t\geq 0$, the function $Y^i_t(\cdot)$ is $\mathcal F_t\otimes\mathcal B(\mathbb R_+)$-measurable.}}. 
\item  Any $\bG^M$-predictable process $Y$ admits the decomposition form $Y_t=\indic_{\tau\geq t}Y_t^0(L)+\indic_{\tau< t}Y_t^1(\tau)$ where {$Y^0(\cdot)$ and $Y^1(\cdot)$} are $\cP(\bF)\otimes\mathcal B(\R_+)$-measurable, {$\cP(\bF)$ being the predictable $\sigma$-algebra associated with the filtration $\bF$}.
\end{enumerate}\end{Lem}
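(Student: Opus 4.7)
The plan is to exploit the standard functional form of $\sigma(L)$-measurability together with the fact that $\Lambda$ is continuous (since $\Lambda_t=\int_0^t\lambda_s\,ds$), so that $L=\Lambda_\tau$ on $\{\tau<\infty\}$. This last identity is what transfers the dependence on $L$ into a dependence on $\tau$ once the default has occurred, and it is the structural ingredient that makes the decompositions possible.

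For part 1, since $\G_t^M=\F_t\vee\sigma(L)$, a classical monotone class argument (starting from indicators $\indic_A\indic_{\{L\in B\}}$ with $A\in\F_t$, $B\in\mathcal B(\R_+)$) gives the existence of an $\F_t\otimes\mathcal B(\R_+)$-measurable kernel $Z_t(\omega,\ell)$ such that $Y_t=Z_t(L)$. I would then split
\begin{equation*}
Y_t=\indic_{\tau>t}Z_t(L)+\indic_{\tau\leq t}Z_t(L),
\end{equation*}
and use $L=\Lambda_\tau$ on $\{\tau\leq t\}$ to rewrite the second term as $Z_t(\Lambda_\tau)$. Setting $Y_t^0(\ell):=Z_t(\ell)$ and $Y_t^1(u):=Z_t(\Lambda_u)$ produces the desired representation, with joint $\F_t\otimes\mathcal B(\R_+)$-measurability of $Y_t^1$ following from the $\bF$-adaptedness (hence joint measurability) of the map $(\omega,u)\mapsto \Lambda_u(\omega)$.

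For part 2, I would extend the pointwise identity to the process level by a monotone class argument on $\bG^M$-adapted processes. The class of processes admitting a decomposition $\indic_{\tau>t}Y_t^0(L)+\indic_{\tau\leq t}Y_t^1(\tau)$ with $Y^0,Y^1$ jointly $\bF\otimes\mathcal B(\R_+)$-adapted is closed under bounded pointwise limits and contains the multiplicative generators of the $\bG^M$-adapted processes, namely products $\xi_t\,\eta(L)$ where $\xi$ is bounded $\bF$-adapted and $\eta$ is bounded Borel (the decomposition for such a generator is obtained exactly as in part 1, now with the time parameter inherited from $\xi$). A standard functional monotone class theorem then extends the statement to every bounded $\bG^M$-adapted process, and localization removes the boundedness.

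Part 3 is where the real work is, and I expect it to be the main obstacle because the indicators switch to $\indic_{\tau\geq t}$ and $\indic_{\tau<t}$: intuitively, because $L$ is $\G_0^M$-measurable and $\Lambda$ is continuous, $\tau$ is a $\bG^M$-predictable stopping time and the stochastic interval $[\![0,\tau]\!]$ is $\bG^M$-predictable, whereas $[\![0,\tau[\![$ is only optional. My plan is a monotone class argument on the $\bG^M$-predictable $\sigma$-algebra, which is generated by sets of the form $A\times(s,t]$ with $A\in\G_s^M$ and $s<t$, together with $A\times\{0\}$ for $A\in\G_0^M$. For such a generator I would apply part 1 to $\indic_A$ to obtain $\indic_A=\indic_{\tau>s}a_s^0(L)+\indic_{\tau\leq s}a_s^1(\tau)$ with $a_s^0,a_s^1$ $\F_s\otimes\mathcal B(\R_+)$-measurable, and then multiply by $\indic_{(s,t]}(u)$: on the product one checks that $\indic_{\{\tau>s\}}\indic_{(s,t]}(u)$ coincides, as a function of $(\omega,u)$, with $\indic_{\{\tau\geq u\}}\indic_{(s,t]}(u)$ up to a $\bG^M$-evanescent set, and similarly $\indic_{\{\tau\leq s\}}\indic_{(s,t]}(u)=\indic_{\{\tau<u\}}\indic_{(s,t]}(u)$ on the relevant set, giving the correct shift of the indicators. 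This yields a decomposition with $Y^0,Y^1$ in $\cP(\bF)\otimes\mathcal B(\R_+)$ for the generators, and the monotone class theorem propagates it to every $\bG^M$-predictable process.
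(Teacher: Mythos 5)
The paper offers no proof of this lemma at all---it is quoted from Jeulin (Lemma 3.13 and 4.4), so your attempt has to be judged on its own terms. Parts 1 and 2 are essentially correct: the representation $Y_t=Z_t(L)$ by monotone class, the identity $L=\Lambda_\tau$ on $\{\tau\leq t\}$ (valid because $\Lambda$ is continuous with $\Lambda_0=0<L$), and the substitution $Y_t^1(u):=Z_t(\Lambda_{u\wedge t})$ give the claimed decomposition. (Do truncate at $t$ as indicated: $\Lambda_u$ is not $\F_t$-measurable for $u>t$, but only the values $u\leq t$ matter on $\{\tau\leq t\}$.) Part 2 then follows by applying Part 1 at each fixed $t$, since the footnote only asks for the pointwise-in-$t$ measurability.

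Part 3, however, contains a genuine error exactly at the step you flag as the crux. For a generator $\indic_A\indic_{(s,t]}(u)$ with $A\in\G^M_s$, the identities you invoke are false: for $u\in(s,t]$ one has $\{\tau>s\}\setminus\{\tau\geq u\}=\{s<\tau<u\}=\{\tau<u\}\setminus\{\tau\leq s\}$, and the set $\{(\omega,u):s<\tau(\omega)<u\leq t\}$ projects onto $\{s<\tau<t\}$, which has positive probability in general---it is not evanescent. Consequently the term $\indic_{\{s<\tau<u\}}\,a_s^0(L)\,\indic_{(s,t]}(u)$ is placed in the wrong bucket by your argument: at time $u$ it belongs to the after-default regime and must be absorbed into $Y^1$. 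The correct split is
\begin{equation*}
\indic_A\indic_{(s,t]}(u)=\indic_{\{\tau\geq u\}}\Big[a_s^0(L)\indic_{(s,t]}(u)\Big]
+\indic_{\{\tau<u\}}\Big[\indic_{(s,t]}(u)\big(\indic_{\{\tau>s\}}a_s^0(\Lambda_\tau)+\indic_{\{\tau\leq s\}}a_s^1(\tau)\big)\Big],
\end{equation*}
where on $\{s<\tau<u\}$ you must again use $L=\Lambda_\tau$. The repaired $Y^1_u(\theta):=\indic_{(s,t]}(u)\indic_{\{\theta<u\}}\big(\indic_{\{\theta>s\}}a_s^0(\Lambda_\theta)+\indic_{\{\theta\leq s\}}a_s^1(\theta)\big)$ is still $\cP(\bF)\otimes\mathcal B(\R_+)$-measurable, because for fixed $\theta>s$ it is supported on $(\theta,t]$ with an $\F_\theta$-measurable coefficient, and for $\theta\leq s$ it is $\indic_{(s,t]}(u)$ times an $\F_s$-measurable coefficient. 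So your overall strategy is sound and the gap is repairable with tools you already use in Part 1, but as written the key identification fails on a non-negligible set.
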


\begin{Rem}To compare with the case of a standard investor, we recall that any $\G_t$-measurable random variable $Z_t$ can be written as $Z_t=\indic_{\tau>t} Z_t^0+\indic_{\tau\leq t}Z_t^1(\tau)$ where $Z_t^0$ and {$Z_t^1(\cdot)$} are respectively $\F_t$-{measurable} and $\F_t\otimes\mathcal B(\R_+)$-measurable.
\end{Rem}

\section {Insider's optimization problem}

\subsection{Portfolio investment strategy and wealth process}
From now on, a finite horizon $T$ is fixed and all investment strategies take place from time $0$ to time $T$.
The insider has access to the same financial market as the standard investor, more precisely, she can invest in two types of financial assets. The first one is a risk-free bond with strictly positive values. We choose it as the num\'eraire and assume, without loss of generality that the value of this bond equal to $1$. The other asset is a risky one which is affected by default risk of  the counterparty firm on which the insider has extra information. 

The price  of this risky asset is observable by all investors on market at any time $t\in [0,T]$. Since it is subject to the counterparty default risk, the price process is modelled by a $\bG$-adapted process $S$, which admits the  decomposition form
\begin{equation}
S_t=S_t^{0}\indic_{t<\tau}+S_t^{1}(\tau)\indic_{t\geq\tau}, \quad 0 \leq t \leq T
\end{equation}where $S^{0}$ is $\bF$-adapted and {$S^{1}(\cdot)$} is $\bF\otimes\mathcal B(\R_+)$-adapted. We suppose that the asset suffers a contagious loss at the default time of the counterparty, that is, \[S_\theta^{1}(\theta)=S_{\theta-}^{0}(1-\gamma_\theta),\]  and that $0<\gamma_\theta< 1$ for any $\theta\geq 0$ so that the asset price remains strictly positive. The process $\gamma$ is $\bF$-adapted and  represents the proportional loss at default. 

We consider the trading strategy of the insider, who chooses to adjust the portfolio of assets according to information accessibility. Therefore, the investment strategy process is characterized by a $\bG^M$-predictable process $\pi$ which represents the proportion of wealth invested in the risky asset and is of the form
\[\pi_t=\indic_{t\leq\tau}\pi_t^{0}(L)+\indic_{t>\tau}\pi_t^1(\tau),\] where $\pi^{0}(\cdot)$ and $\pi^{1}(\cdot) $ are $\cP(\bF)\otimes\mathcal B(\R_+)$-measurable processes.  Starting from an initial wealth $X_0\in \R_+$, the total wealth of the insider's portfolio  is then a $\bG^M$-adapted process given by 
\begin{equation}\label{wealth decomposition}X_t=\indic_{t<\tau}X^{0}_t(L)+\indic_{t\geq\tau}X^{1}_t(\tau)\end{equation}
where  the before-default wealth process satisfies the self-financing equation
\begin{equation}
dX_t^{0}(L)=X_t^{0}(L)\pi_t^{0}(L)\frac{dS_t^{0}}{S_t^{0}},\quad  0 \leq t \leq T
\end{equation}
and after the default $\tau$, the wealth process has a change of regime in its dynamics and satisfies
\begin{equation}dX_t^{1}(\tau)=X_t^{1}(\tau)\pi_t^{1}(\tau)\frac{dS_t^{1}(\tau)}{S_t^{1}(\tau)},\quad  t \in \lbr \tau, T \rbr.
\end{equation}
At the default time, the wealth has a jump in its value. 
 Therefore, at time $\tau$, the initial value of the after-default wealth process is \begin{equation}\label{X1 initial}X_{\tau}^{1}(\tau)=X^{0}_{\tau-}(L)\big(1-\pi_\tau^{0}(L)\gamma_\tau\big).\end{equation} 
 We suppose that
$\pi_\tau^{0}(L)\gamma_\tau<1$, so that the wealth remains  strictly positive after the jump due to the counterparty default.


We consider  the following dynamics  for the asset price $S$ on before-default set $\{t<\tau\}$ for $S^0$ and on after-default set $\{t\geq \tau\}$ for $S^1$:
\begin{eqnarray*}
dS_t^0&=&S_t^0(\mu_t^0dt+\sigma^0_tdW_t), \quad 0 \leq t \leq T  \\
dS_t^1(\theta)&=&S_t^1(\theta)(\mu^1_t(\theta)dt+\sigma^1_t(\theta)dW_t),\quad  \theta \leq t \leq T 
\end{eqnarray*}
where the coefficients {{ $\mu^0$ and  $\sigma^0$ are $\bF$-adapted processes, $\mu^1(\theta)$ and $\sigma^1(\theta)$ are $\bF\otimes\mathcal B(\R_+)$-adapted}} processes, and $W$ is an $\bF$-Brownian motion. In addition,  we suppose the integrability condition
\[\int_0^T\Big|\frac{\mu_t^0}{\sigma_t^0}\Big|^2dt+\int_\theta^T\Big|\frac{\mu_t^1(\theta)}{\sigma_t^1(\theta)}\Big|^2dt+\int_0^T|\sigma^0_t|^2dt+\int_\theta^T|\sigma^1_t(\theta)|^2dt<\infty.\] 
So  the values of the before-default and after-default wealth satisfy  the dynamics 
\begin{eqnarray}
dX_t^{0}(L)&=&X_t^{0}(L)\pi_t^{0}(L)(\mu_t^0dt+\sigma^0_tdW_t),\quad  0 \leq t \leq T    \label{X0}\\
dX_t^{1}(\tau)&=&X_t^{1}(\tau)\pi_t^{1}(\tau)(\mu^1_t(\tau)dt+\sigma^1_t(\tau)dW_t),\quad t \in \lbr \tau, T \rbr \label{X1}
\end{eqnarray}and the jump at default of the wealth process is given by the equality \eqref{X1 initial}.
Finally, we define the admissible trading strategy family $\mathcal A_L$ as the {set of pairs $(\pi^{0}(\cdot),\pi^1(\cdot))$, where $\pi^{0}(\cdot)$ and $\pi^1(\cdot)$ are $\cP(\bF)\otimes\mathcal B(\mathbb R_+)$-measurable processes such that
\begin{equation}\label{set AM}\forall\,{l>0,\;}\bigg(\int_0^{\tau_l\wedge T} |\pi_t^{0}(l)\sigma_t^0|^2dt+\int_{\tau_l\wedge T}^T|\pi^{1}_t(\tau_l)\sigma_t^1(\tau_l)|^2dt\bigg)<\infty \text{ and }  \pi_{\tau_l}^{0}(l)\gamma_{\tau_l}<1, \,a.s.,\end{equation}
where $\tau_l$ is the $\bF$-stopping time defined by $\tau_l:=\inf\{t\,:\,\Lambda_t\geq l\}$.} 


\begin{Rem}\label{rem3.1}
Let $\mathcal A$ denote the set of all $\mathbb G^M$-predictable processes $\pi$ such that
$\int_0^T|\pi_t\sigma_t|^2dt<\infty$ and 
$\pi_{\tau}\gamma_\tau<1.$
If $(\pi^0(\cdot),\pi^1(\cdot))$ is an element in $\mathcal A_L$, then $(\pi_t=\pi^0_t(L)\indic_{\tau\ge t }+\pi_t^1(\tau_L)\indic_{\tau< t},t\geq 0)$
is a processus in the set $\mathcal A$. Conversly, given a process $\pi\in\mathcal A$, there exists a pair $(\pi^0(\cdot),\pi^1(\cdot))\in\mathcal A_L$ such that $\pi_t=\pi_t^0(L)\indic_{\tau\ge t}+\pi^1_t(\tau_L)\indic_{\tau< t}$ for any $t\geq 0$, thanks to Lemma \ref{Lem:decomp}. \end{Rem}


\subsection{The optimization problem}

The insider has the objective to maximize her expected utility function on the terminal wealth of the portfolio. 
Let $U$ be a utility function defined on $(0,+\infty)$, strictly increasing,  strictly concave and of class $C^1$ on $(0,+\infty)$, and satisfying $\lim_{x\rightarrow 0^+}U'(x)= +\infty$ and  $\lim_{x\rightarrow\infty}U'(x)=0$.

We shall consider the problem
\begin{equation}\label{sup U insider}
V_0=\sup_{\pi\in\mathcal A_L} \esp[U(X_T)]
\end{equation}
and search for the optimal strategy $\widehat\pi$ for the insider. A similar problem has been studied in \cite{jp} for a standard investor with $\bG$-predictable strategy $\overline\pi=(\overline\pi^0,\overline\pi^1(\cdot))$ and $\bG$-adapted wealth $\overline{X}=(\overline{X}^0,\overline{X}^1(\cdot))$. The admissible strategy set $\overline{\mathcal A}$ {consists of pairs $\overline\pi=(\overline\pi^0,\overline\pi^1(\cdot))$ where $\overline\pi^0$ and $\overline\pi^1(\cdot)$ are respectively $\mathbb F$-predictable and $\mathcal P(\mathbb F)\otimes\mathcal B(\mathbb R_+)$-measurable, and such that 
\[\forall\, \theta\in\R_+,\; \int_0^T| \overline \pi_t^{0}\sigma_t^0|^2dt+\int_\theta^T| \overline \pi^{1}_t(\theta)\sigma_t^1(\theta)|^2dt<\infty \text{ and }  \overline \pi_\theta^{0}\gamma_\theta<1, a.s.\]}

In this paper, we concentrate on the insider's optimization problem \eqref{sup U insider} and we are interested in the information flow impact on the trading strategies.  Intuitively, the insider should have a larger gain of investment due to the extra  information.  Indeed, 
if the investor and the insider have the same utility function and if they can invest in the same financial assets, then the only difference between them relies on their available information in the sense that the corresponding filtrations satisfy $\bG \subset \bG^M$, which implies the same inclusion for the sets of admissible strategies: $\overline{\mathcal A} \subset  \mathcal A_L$. Thus
 the corresponding value functions satisfy
\begin{equation}\label{comparaisonvaleur}
 \overline V_0:= \sup_{\overline\pi \in\overline{\mathcal A}} \esp[U(\overline{X}_T)] \leq V_0.
\end{equation}
since  the supremum in $ \overline  V_0$ is taken on a smaller set than the one in  $ V_0$. Remark here that on a given sample path, it may happen that $U(\overline{X}_T) >U(X_T)$ if both investor and insider follow their  optimal strategies $\widehat{\overline{\pi}}$ and $\widehat{\pi}$ respectively,  but one always have  in expectation for  optimal strategies $\esp[U(\overline{X}_T^{\widehat{\overline{\pi}}})] \leq \esp[U(X_T^{\widehat{\pi}})]$.

The  {expectation} $\esp[U(X_T)]$ can be written, by using the wealth decomposition  formula \eqref{wealth decomposition}, as
\begin{equation}\label{decomposition UX_T}
\esp[U(X_T)]=\esp[\indic_{T<\tau}U(X_T^{0}(L))+\indic_{T\geq\tau} U(X_T^{1}(\tau))].
\end{equation}
The aim of the above formulation,  similar as in \cite{jp},  is to reduce the initial optimization problem in an incomplete market into two problems : the after-default and before-default ones. Nevertheless, the  approach we adopt here is different since, as mentioned previously, the random time $\tau$ is not a totally inaccessible random time for the insider and we can no longer use the conditional density approach to solve the problem. 

Our approach will use the  theory of initial enlargement of filtration (also called  the strong  information modeling in \cite{Baud2}) by using the value of the random  default barrier $L$   known to the insider.  More precisely, we introduce a family of $\bF$-stopping times 
$\tau_l=\inf\{t: \Lambda_t\geq l\}$ for all $l > 0$ which are  possible realizations of $L$ and we work under an equivalent probability measure $\mathbb{P}^{L}$ under which  $L$ is independent to $\F_T$. Thus in our framework, we shall need the Radon-Nikodym derivative process $p_t(L)$  which is the density of  the historical  probability measure $\mathbb{P}$ with respect to this equivalent  probability measure $\mathbb{P}^{L}$  and it will play a similar role as the default density process in \cite{jp}. 

This probability density hypothesis is given below. It is a standard hypothesis in the theory of initial enlargement of filtration due to Jacod \cite{Jac, Jac2}.
\begin{hypothese}\label{hyp1}
We assume that $L$ is an  $\mathfrak{A}$-measurable random variable
with values in  $]0,+\infty[$, which satisfies the assumption :
$$\mathbb{P}(L\in \cdot\, |\,\mathcal{F}_{t})(\omega) \sim\mathbb{P}(L\in \cdot ), \quad \forall t \in [0,T],\,   \mathbb{P}-a.s..$$
\end{hypothese}
We denote by $P^L _t(\omega,dx)$ a regular version of the
conditional law of $L$ given $\mathcal{F}_t $  and by {$P^{{L}}$}
the law of $L$ (under the probability $\mathbb P$). According to
\cite{Jac2}, there exists a measurable version of the conditional
density
\begin{equation}\label{pt}p_t(x)(\omega)=\frac{dP^L _t}{dP^{L}}(\omega,x)\end{equation}  which is a positive
$(\mathbb F,\mathbb{P})$-martingale
 and hence can be written as
\begin{equation}\label{dynamicspt}
 p_t(x)=p_0(x)+ \int_0^t p_s(x) \rho_s(x) d W_s, \quad\forall x\in]0,+\infty[, \quad t \in [0,T] 
\end{equation}
for some $\mathbb F$-predictable process $(\rho_t(x))_{t \in [0,T]}$.
It is proved in \cite{Pon} that Assumption \ref{hyp1} is satisfied
if and only if there exists a probability measure equivalent to
$\mathbb{P}$ and  under which $\mathcal{F}_T$ and $\sigma(L)$  are independent. Among such equivalent probability measures, the probability
$\mathbb{P}^{L}$ defined by the Radon-Nikodym derivative process
$$\esp_{\mathbb{P}^{L}}  \Big[ \frac{ \mathrm{d}\mathbb{P}}{\mathrm{d}
\mathbb{P}^{L}}\Big|\,\mathcal{G}^M_t\Big]=p_t(L)$$ is the only
one that is identical to   $\mathbb{P}$ on $\mathcal{F}_{\infty}$. For examples of $L$ and explicit computations of corresponding $p_t(L)$, interest reader may refer to \cite{nous2}.

We recall that we consider the optimization problem (\ref{sup U insider})
$V_0=\sup_{\pi\in\mathcal A_L} \esp[U(X_T)]$. {Since the initial $\sigma$-field is non-trivial, it is useful to consider the conditional optimization problem}\begin{equation}\label{sup U insiderbis}
\esssup_{\pi\in\mathcal A_L} \esp[U(X_T)  | \mathcal{G}^M_0 ],
\end{equation}
where $\G_0^M=\sigma(L)$. 
The link between those two optimization problems (\ref{sup U insider}) and (\ref{sup U insiderbis}) is given by \cite{ABS} : if the supremum in (\ref{sup U insiderbis}) is attained by some strategy in $\mathcal A_L$, then the $\omega$-wise optimum is also a solution to (\ref{sup U insider}). Although  the supremum is not {necessarily} attained in our problem, we will see in Proposition \ref{cvversoptimal} that there exists a sequence of admissible strategies $\pi_n$ such that $\esp[U(X^{\pi_n}_T)  | \mathcal{G}^M_0 ]$ converges in $L^1$ to \eqref{sup U insiderbis} and {we can prove} that for the  same sequence, $\esp[U(X^{\pi_n}_T)]$ converges to $V_0$. 
Thus, we will first solve the optimization problem (\ref{sup U insiderbis}), and then deduce the solution of  (\ref{sup U insider}) by taking the expectation, as explained in the following Proposition:

\begin{Pro}\label{Pro:espcond}Under the Assumption \ref{hyp1}, one has 
\begin{align*}
&\esp[U(X_T) | \mathcal{G}^M_0 ]=\esp\left[ {p_T(l)} \left( \indic_{T<\tau_l}U(X_T^{0}(l))+\indic_{T\geq \tau_l} U(X_T^{1}(\tau_l)) \right) \right]_{l=L}\\
&\esp[U(X_T)]=\int_{\R_+}\esp\left[ {p_T(l)} \left( \indic_{T<\tau_l}U(X_T^{0}(l))+\indic_{T\geq \tau_l} U(X_T^{1}(\tau_l)) \right) \right] P^L(dl)
\end{align*}
where for $l > 0$, $\tau_l:=\inf\{t: \Lambda_t\geq l\}$. 
\end{Pro}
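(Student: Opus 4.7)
The plan is to combine the wealth decomposition \eqref{wealth decomposition} with a change of measure to $\mathbb{P}^{L}$, under which the freezing lemma applies thanks to the independence of $L$ and $\mathcal{F}_T$.

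First I would rewrite $U(X_T)$ using \eqref{decomposition UX_T}, observing that the default time equals $\tau = \tau_L$ with the family $\tau_l := \inf\{t : \Lambda_t \geq l\}$ of $\mathbb{F}$-stopping times, so that
\[
U(X_T) = \indic_{T<\tau_L} U(X_T^0(L)) + \indic_{T\geq \tau_L} U(X_T^1(\tau_L)).
\]
Here for each fixed $l>0$ the random variables $X_T^0(l)$ and $X_T^1(\tau_l)$ are $\mathcal{F}_T$-measurable, since the controls $\pi^0(\cdot)$, $\pi^1(\cdot)$ are $\mathcal{P}(\mathbb{F})\otimes\mathcal{B}(\mathbb{R}_+)$-measurable and the SDEs \eqref{X0}--\eqref{X1} have $\mathbb{F}\otimes\mathcal{B}(\mathbb{R}_+)$-adapted coefficients.

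Next I would invoke the change of probability. Since $p_t(L)$ is the $(\mathbb{G}^M,\mathbb{P}^L)$-density of $\mathbb{P}$ and $p_0(L)=1$ (the conditional law at time $0$ coincides with the law of $L$), Bayes' formula yields, for any $\mathcal{G}_T^M$-measurable $\mathbb{P}$-integrable $H$,
\[
\esp_{\mathbb{P}}[H \mid \mathcal{G}_0^M] \;=\; \esp_{\mathbb{P}^L}[p_T(L)\,H \mid \sigma(L)].
\]
Applying this to $H = U(X_T)$ and then using the freezing lemma under $\mathbb{P}^L$ (valid because $\sigma(L)$ is independent of $\mathcal{F}_T$, and because the integrand, once $L$ is replaced by a parameter $l$, is $\mathcal{F}_T$-measurable) gives
\[
\esp_{\mathbb{P}}[U(X_T)\mid\mathcal{G}_0^M]
= \esp_{\mathbb{P}^L}\!\left[p_T(l)\bigl(\indic_{T<\tau_l}U(X_T^0(l))+\indic_{T\geq\tau_l}U(X_T^1(\tau_l))\bigr)\right]_{l=L}.
\]
Finally, since $\mathbb{P}^L$ and $\mathbb{P}$ coincide on $\mathcal{F}_\infty$ and the bracketed expression is $\mathcal{F}_T$-measurable, the $\mathbb{P}^L$-expectation equals the $\mathbb{P}$-expectation, proving the first identity.

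For the second identity I would take the $\mathbb{P}$-expectation of the first one and use the tower property together with the fact that the law of $L$ is $P^L$, so that integration against $\mathcal{G}_0^M = \sigma(L)$ becomes integration against $P^L(dl)$. The only delicate point is the rigorous justification of the freezing step (a measurable parametrization of the solutions of \eqref{X0}--\eqref{X1} in the parameter $l$ and an application of Fubini under $\mathbb{P}^L$), which I expect to be the main technical obstacle; this follows from Lemma \ref{Lem:decomp} and Assumption \ref{hyp1} once one has verified the $\mathbb{P}^L$-integrability of $p_T(l)U(X_T^0(l))$ and $p_T(l)U(X_T^1(\tau_l))$ along admissible strategies in $\mathcal{A}_L$.
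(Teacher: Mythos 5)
Your proposal is correct and follows essentially the same route as the paper's own proof: decompose $U(X_T)$ on $\{T<\tau\}$ and $\{T\geq\tau\}$, pass to $\mathbb{P}^L$ via the density $p_T(L)$, use the independence of $\sigma(L)$ and $\mathcal{F}_T$ under $\mathbb{P}^L$ to freeze $L=l$, and return to $\mathbb{P}$ on $\mathcal{F}_T$ before integrating against $P^L(dl)$. You merely make explicit two points the paper leaves implicit (the normalization $p_0(L)=1$ in Bayes' formula and the measurability of the parametrized family $l\mapsto (X_T^0(l),X_T^1(\tau_l))$), which is a welcome but not substantively different refinement.
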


\proof 
{We will use the change of probability to $\mathbb{P}^{L}$} in order to reduce to the case where $L$ and $\mathcal F_T$ are independent.  Firstly, by \eqref{decomposition UX_T}, 
\begin{eqnarray*}
\esp[U(X_T)| \mathcal{G}^M_0]&=&  \esp \left[ \indic_{T<\tau}U(X_T^{0}(L))+\indic_{T\geq\tau} U(X_T^{1}(\tau)) \, \, | L \right]  \\
&=& \esp_{\mathbb{P}^{L} } \left[ {p_T(l)} \left( \indic_{T<\tau_l} U(X_T^{0}(l))+\indic_{T\geq\tau_l} U(X_T^{1}(\tau_l)) \right)  \,  \right]_{l=L} \\
&=&  \esp^{ } \left[ {p_T(l)} \left( \indic_{T<\tau_l} U(X_T^{0}(l))+\indic_{T \geq \tau_l} U(X_T^{1}(\tau_l))\right)  \,  \right]_{l=L} 
\end{eqnarray*}
where the last two equalities follow respectively from the facts that $\mathcal{F}_{\infty}$ and $\sigma(L)$  are independent under 
$\mathbb{P}^{L}$ and  that $\mathbb{P}^{L}$  is identical to   $\mathbb{P}$ on $\mathcal{F}_T$. Thus
\begin{eqnarray*}
\esp[U(X_T)] 
&=& \esp  \left[ \esp \left[ \indic_{T<\tau}U(X_T^{0}(L))+\indic_{T\geq\tau} U(X_T^{1}(\tau)) \, \, | L \right]   \right]\\
&=& \int_{\mathbb{R_+}} \esp^{ } \left[ {p_T(l)} \left( \indic_{T<\tau_l} U(X_T^{0}(l))+\indic_{T \geq \tau_l} U(X_T^{1}(\tau_l))\right)  \,  \right] P^L(dl).
\end{eqnarray*}
\finproof

\noindent This motivates to introduce, for any $l > 0$,  the set $\mathcal A_l$  of {pairs $\pi=(\pi^{0},\pi^1(\cdot))$, where $\pi^{0}$ and $\pi^1(\cdot)$ are respectively $\mathbb F$-predictable and $\cP(\bF)\otimes\mathcal B(\mathbb R_+)$-measurable processes, such that
\begin{equation}\label{set Al}\int_0^{\tau_l\wedge T} |\pi_t^{0}\sigma_t^0|^2dt+\int_{\tau_l\wedge T}^T|\pi^{1}_t(\tau_l)\sigma_t^1(\tau_l)|^2dt<\infty \text{ and }  \pi_{\tau_l}^{0}\gamma_{\tau_l}<1, \,a.s.\end{equation} and consider the following optimization problem
\begin{equation}\label{Equ:sup U insiderbis}V_0(l)=\sup_{\pi\in\mathcal A_l}\esp\left[ {p_T(l)} \left( \indic_{T<\tau_l}U(X_T^{0})+\indic_{T\geq \tau_l} U(X_T^{1}(\tau_l)) \right) \right],\end{equation}
where $\tau_l$ is the $\mathbb F$-stopping time $\inf\{t\geq 0\,:\,\Lambda_t\geq l\}$.}

\subsection{The necessity of limiting short selling for the insider before default}
In this subsection, we show that if the regulators do not impose  any constraint on short selling for the insider before the default of the counterparty,
then the insider can achieve a terminal wealth that is not bounded in $L^1$.

\begin{Pro}\label{Thm:uinf} We suppose that the following conditions are satisfied:
\begin{enumerate}[(1)]
\item the process $\Lambda$ is a.s. strictly increasing on $[0,T]$,
\item for any $l$ in the support of the distribution of the law of $L$, one has $\mathbb P(\Lambda_T\geq l)>0$.
\end{enumerate}  
Then we have
\[\esssup_{\pi\in\mathcal A_L} \esp[X_T \,|\, \mathcal{G}^M_0 ]=+\infty\quad\text{a.s.}\]In addition, for any utility function $U$ such that $\displaystyle\lim_{x\rightarrow+\infty}U(x)=+\infty$, 
\[\esssup_{\pi\in\mathcal A_L} \esp[U(X_T) \,|\, \mathcal{G}^M_0 ]=+\infty\quad\text{a.s.}\]

\end{Pro}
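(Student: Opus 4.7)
The idea is to exploit the insider's advance knowledge of $L$, equivalently of $\tau=\tau_L$: without a short-selling constraint she can hold an arbitrarily large negative position exactly at the already-known default time and cash in the guaranteed downward jump of the risky asset. Concretely, for each $n\geq 1$ I would try the strategy
\[
\pi^{0,(n)}_t(l):=-n\,\indic_{[\tau_l]}(t),\qquad \pi^{1,(n)}_t(\theta):=0,
\]
where $[\tau_l]=\{(t,\omega):t=\tau_l(\omega)\}$ is the graph of $\tau_l$. The crucial measurability point is that, under assumption~$(1)$, $\Lambda$ is continuous and strictly increasing, so $\tau_l^k:=\inf\{t:\Lambda_t\geq l-1/k\}$ satisfies $\tau_l^k<\tau_l$ and $\tau_l^k\nearrow\tau_l$; hence $\tau_l$ is an $\bF$-predictable stopping time, $[\tau_l]$ is a predictable set, and $\pi^{0,(n)}$ is $\cP(\bF)\otimes\mathcal B(\R_+)$-measurable.

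The other requirements in \eqref{set Al} are immediate: the support of $\pi^{0,(n)}$ has zero Lebesgue measure, and $\pi^{0,(n)}_{\tau_l}\gamma_{\tau_l}=-n\gamma_{\tau_l}<1$, so $\pi^{(n)}\in\cA_l$ for every $l$. Integrating \eqref{X0} over a measure-zero set gives $X^{0,(n)}_{\tau_l -}=X_0$; the jump formula \eqref{X1 initial} then yields $X^{1,(n)}_{\tau_l}(\tau_l)=X_0(1+n\gamma_{\tau_l})$, which stays constant until $T$ because $\pi^{1,(n)}=0$. Hence, when $L=l$,
\[
X_T^{(n)}=X_0\,\indic_{T<\tau_l}+X_0(1+n\gamma_{\tau_l})\,\indic_{T\geq\tau_l}.
\]

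Proposition~\ref{Pro:espcond} then gives, for the linear case $U(x)=x$,
\[
\esp\bigl[X_T^{(n)}\,\big|\,\G^M_0\bigr]=\Bigl(X_0\,\esp[p_T(l)]+nX_0\,\esp\!\left[p_T(l)\indic_{T\geq\tau_l}\gamma_{\tau_l}\right]\Bigr)_{l=L}.
\]
Because $p_T(l)>0$ $\proba$-a.s.\ and $\gamma_{\tau_l}>0$, the expectation on the right is strictly positive exactly when $\proba(\Lambda_T\geq l)>0$, which is assumption~$(2)$. Consequently the right-hand side tends to $+\infty$ in $n$ for $P^L$-almost every $l$, and since $L$ lies in the support of $P^L$ almost surely we conclude $\esssup_{\pi\in\cA_L}\esp[X_T\,|\,\G^M_0]=+\infty$ a.s. For a utility $U$ with $\lim_{x\to\infty}U(x)=+\infty$, on $\{T\geq\tau_l\}$ the sequence $U(X_0(1+n\gamma_{\tau_l}))$ increases monotonically to $+\infty$, so monotone convergence applied to $\esp[p_T(l)\indic_{T\geq\tau_l}U(X_0(1+n\gamma_{\tau_l}))]$ yields $+\infty$ under~$(2)$, while the complementary term $U(X_0)\esp[p_T(l)\indic_{T<\tau_l}]$ is a harmless constant. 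The only delicate step is the predictability argument: it is exactly where assumption~$(1)$ enters, and it is what legitimises the point-mass strategy, with $\pi^{0}_\tau$ in \eqref{X1 initial} read as the value of a predictable process at the predictable stopping time $\tau$.
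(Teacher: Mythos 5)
Your proof is correct, and the economic idea is exactly the paper's: the insider, knowing $L$, short-sells massively around the announced time $\tau_l$ to capture the guaranteed downward jump, and conditions (1)--(2) guarantee respectively that this position can be timed and that the jump occurs with positive probability. The execution differs, though. The paper never uses a strategy supported on the graph $\lbr\tau_l\rbr$: it takes $\pi^0_t(l)=-\psi\indic_{\tau_{\varphi(l)}<t}$, i.e.\ a short position of fixed size $\psi$ held on the whole interval $\rbr\tau_{\varphi(l)},\tau_l\rbr$, computes the resulting wealth at $\tau_l$ as $X_0(1+\gamma_{\tau_l}\psi)$ times an exponential correction, and then lets $\varphi_n\uparrow\mathrm{id}$ so that (by condition (1)) $\tau_{\varphi_n(l)}\to\tau_l$, the correction tends to $1$ a.s., and Fatou's lemma gives $\liminf_n\esp[\indic_{T\geq\tau_l}p_T(l)X^{\varphi_n,\psi}_{\tau_l}]\geq\esp[\indic_{T\geq\tau_l}p_T(l)X_0(1+\gamma_{\tau_l}\psi)]$, before sending $\psi\to\infty$. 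You instead take the degenerate limit directly, putting the whole position at the single instant $\tau_l$; this shortens the argument (no exponential correction, no Fatou) but makes everything rest on the claim that $\lbr\tau_l\rbr$ is a predictable set, hence that $-n\indic_{\lbr\tau_l\rbr}$ is an admissible element of $\cA_l$. That claim is in fact true — $\Lambda=\int_0^\cdot\lambda_s\,ds$ is continuous, so $\tau_l$ is announced by $\sigma_k=\inf\{t:\Lambda_t\geq l(1-1/k)\}$ and the graph of a predictable time is predictable (and your identification $\{t=\tau_l\}=\{\Lambda_t=l\}$ gives the joint $\cP(\bF)\otimes\mathcal B(\R_+)$-measurability in $l$); note strict monotonicity is not even needed for this, whereas the paper genuinely needs it for $\tau_{\varphi_n(l)}\to\tau_l$. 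Be aware, however, that the authors themselves are reluctant to admit graph-supported controls: in Section 4.2 they declare $\pi^{\mathrm{np}}=\indic_{\lbr 0,\tau_l\lbr}\hat\pi^0-\delta\indic_{\tau_l}$ ``not a predictable process'' and approximate it by interval-supported strategies, which is precisely the same device as in their proof here. So your route is a legitimate and more economical alternative, while the paper's approximation argument is the one that remains valid if one insists on strategies that are nonzero only on sets of positive $dt$-measure. Two cosmetic points: your announcing sequence $\inf\{t:\Lambda_t\geq l-1/k\}$ should be restricted to $k>1/l$ (or replaced by $l(1-1/k)$), and the monotone-convergence step for general $U$ should note that the integrands are uniformly bounded below by the integrable $p_T(l)\indic_{T\geq\tau_l}U(X_0)$, since $U$ need not be nonnegative.
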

\begin{proof}
Let $\varphi:\;]0,+\infty[\rightarrow \;]0,+\infty[$ be an increasing function such that $\varphi(l)<l$ for any $l\in\,]0,+\infty[$. Let $\psi>0$ be a constant. For each $l\in\, ]0,+\infty[$, we define a strategy $\pi(l)=(\pi^0(l),\pi^1(\cdot))\in\mathcal A_l$ as follows
\[\pi^0_t(l)=-\psi\indic_{\tau_{\varphi(l)}<t},\quad
\pi^1(\cdot)\equiv 0.\]
Note that $(\pi^0(\cdot),\pi^1(\cdot))$ is an admissible strategy in $\mathcal A_L$. The value at the time $\tau_l$ of the  corresponding wealth process $X^{\varphi,\psi}(l)$ is equal to $(1 + \gamma_{\tau_l}\psi)X_{\tau_l^-}^{\varphi,\psi}$. By the dynamics of the wealth process \eqref{X0} and \eqref{X1}, on $\{\tau_l\leq T\}$, we have
\[\begin{split}X_{\tau_l}^{\varphi,\psi}&=X_0 (1 + {\gamma}_{\tau_l}\psi) \exp \left(- \int_{\tau_{\varphi(l)}}^{\tau_l}  \big( {\mu^0_t }{\psi} + \frac12 (\sigma^0_t)^2\psi^2\big) dt    -  \int_{\tau_{\varphi(l)}}^{\tau_l}  {\sigma_t^0 }\psi  dW_t   \right).
\end{split} \]
Moreover, 
\[\begin{split}&\quad\;\esp[X_T| \mathcal{G}^M_0]= \esp\left[ {p_T(l)} \left( \indic_{T<\tau_l}X_T^{0}(l)+\indic_{T\geq \tau_l} X_T^{1}(\tau_l) \right) \right]_{l=L}
\geq  \esp\left[\indic_{T\geq\tau_l}p_T(l)X_{\tau_l}^{\varphi,\psi}\right]_{l=L}
\end{split}\]
Now fix an increasing sequence $(\varphi_n)_{n\geq 1}$ of functions such that $\varphi_n(l)<l$ for $l\in ]0,\infty[$ and that $\lim_{n\rightarrow+\infty}\varphi_n(l)=l$. 
By the condition (1), one obtains that $\tau_{\varphi_n(l)}$ converges a.s. to $\tau_l$ when $n\rightarrow+\infty$. 
The sequence of random variables
$\int_{\tau_{\varphi_n(l)}\wedge T}^{\tau_l\wedge T}\sigma_t^0\psi \,dW_t$, $n\geq 1$
converges a.s. to $0$. Then by Fatou's lemma
\[\liminf_{n\rightarrow+\infty}\mathbb E[\indic_{T\geq\tau_l}p_T(l) X_{\tau_l}^{\varphi_n,\psi}]\geqslant \mathbb E[\indic_{T\geq\tau_l}p_T(l)X_0(1+\gamma_{\tau_l}\psi)]\]which implies the first assertion since $\proba(\tau_l\leq T)>0$ by condition (2).
We use the similar argument and  the assumption on $U$ to obtain 
\[\lim_{\psi\rightarrow+\infty}\mathbb E[\indic_{T\geq\tau_l}p_T(l)U(X_0(1+\gamma_{\tau_l}\psi))]_{l=L}=\infty.\]

\end{proof}

\begin{Rem}The strategies mentionned in this proof are not arbitrage strategies because for any fixed function $\varphi$ as in the proof, $\mathbb{P} ( T \in \lbr \tau_{\varphi(L)}, \tau \lbr) >0$
 and on this event, the strategy of the insider that consists of betting on the default before maturity $T$ turns out to be a wrong bet. Thus, on a non null probability set, the strategy of a standard investor outperforms the one of the insiders,  although the converse inequality holds on expectation.
\end{Rem}

Proposition \ref{Thm:uinf} justifies to  consider, instead of $\cA_L$ defined in (\ref{set AM}), the 
following  admissible trading strategy sets for the insider : for all $\delta \geq 0$,
\begin{eqnarray}\label{set AMdelta}
&&\cA^\delta_L=\{(\pi^{0}(\cdot),\pi^{1}(\cdot))\in\mathcal A_L  \text{ such that  } \pi^{0}\geq - \delta  \}  \end{eqnarray}
and to quantify the impact of the lower bound $\delta$. {Similar to \eqref{set Al}}, we define
\[\mathcal A_l^\delta=\{(\pi^0,\pi^1(\cdot))\in
\mathcal A_l\text{ such that }\pi^0\geq -\delta\}.\]

{Before considering  the main optimization problem 
\begin{equation}\label{Equ:optdel}
\esssup_{\pi\in\mathcal A^\delta_L}\esp[U(X_T)\, |\, \mathcal{G}^M_0 ]
\end{equation}
we first consider an alternative family of optimization problems depending on the parameter $l\in\mathbb ]0,+\infty[$, 
\begin{equation}\label{optimisation with tau_l} 
V_0^\delta(l):=\sup_{\pi\in\cA_l^\delta} \esp\left[ {p_T(l)} \left( \indic_{T<\tau_l}U(X_T^{0}(l))+\indic_{T\geq \tau_l} U(X_T^{1}(\tau_l)) \right) \right]. 
\end{equation}
The following theorem shows that the optimal value of the optimization problem \eqref{Equ:optdel} is actually equal to $V_0^\delta(L)$.}


\begin{Thm}\label{Thm:measel} 
With the above notation, we have
\[V_0^\delta(L)=\esssup_{\pi\in\mathcal A^\delta_L}\esp[U(X_T)\, |\, \mathcal{G}^M_0 ]
\qquad\text{a.s.}\]
\end{Thm}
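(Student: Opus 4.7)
The plan is to split the stated equality into two inequalities and combine them, using the formula of Proposition \ref{Pro:espcond} read $\omega$-wise in $l=L(\omega)$.

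For the upper bound, fix an arbitrary $\pi=(\pi^{0}(\cdot),\pi^{1}(\cdot))\in\mathcal A_L^\delta$. Repeating the change of probability to $\mathbb P^{L}$ used in the proof of Proposition \ref{Pro:espcond} gives
\[\esp[U(X_T)\,|\,\mathcal G_0^M]=\esp\Bigl[p_T(l)\bigl(\indic_{T<\tau_l}U(X_T^{0,\pi^{0}(\cdot,l)})+\indic_{T\geq\tau_l}U(X_T^{1,\pi^{1}}(\tau_l))\bigr)\Bigr]_{l=L}.\]
For each fixed $l>0$, the section $(\pi^{0}(\cdot,l),\pi^{1}(\cdot))$ is $\mathcal P(\bF)\otimes\mathcal B(\R_+)$-measurable, satisfies $\pi^{0}(\cdot,l)\geq-\delta$, and inherits from \eqref{set AM} the integrability and jump constraints, so it lies in $\mathcal A_l^\delta$. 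Hence the inner expectation is at most $V_0^\delta(l)$, and substituting $l=L$ yields $\esp[U(X_T)\,|\,\mathcal G_0^M]\leq V_0^\delta(L)$ a.s.; taking the essential supremum preserves this bound.

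For the reverse inequality, I would construct, for each $n\geq 1$, a single global strategy $\tilde\pi_n\in\mathcal A_L^\delta$ whose section at $l$ is $(1/n)$-optimal for \eqref{optimisation with tau_l}. Writing
\[J(l,\pi):=\esp\Bigl[p_T(l)\bigl(\indic_{T<\tau_l}U(X_T^{0,\pi^0})+\indic_{T\geq\tau_l}U(X_T^{1,\pi^1}(\tau_l))\bigr)\Bigr],\]
I would first exhibit a countable family $\{\pi^{(k)}\}_{k\in\mathbb N}\subset\mathcal A_l^\delta$ (for example bounded piecewise-constant $\mathbb F$-predictable strategies with rational values on rational time grids, truncated at $-\delta$) such that $V_0^\delta(l)=\sup_k J(l,\pi^{(k)})$ for every $l>0$. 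Since $(l,k)\mapsto J(l,\pi^{(k)})$ is Borel, the function $l\mapsto V_0^\delta(l)$ is Borel and the first-hitting index $k_n(l):=\min\{k\in\mathbb N : J(l,\pi^{(k)})\geq V_0^\delta(l)-1/n\}$ is Borel in $l$. The pasted process $\tilde\pi_n(\omega,t,l):=\pi^{(k_n(l))}(\omega,t)$ is then $\mathcal P(\bF)\otimes\mathcal B(\R_+)$-measurable, hence defines an element of $\mathcal A_L^\delta$, and applying the conditional-expectation formula once more yields $\esp[U(X_T^{\tilde\pi_n})\,|\,\mathcal G_0^M]\geq V_0^\delta(L)-1/n$ a.s. Letting $n\to\infty$ closes the gap.

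The one genuinely non-trivial step is the construction of the countable separating family $\{\pi^{(k)}\}$ and the accompanying Borel measurability of $V_0^\delta(\cdot)$; this is the place where a measurable-selection argument is really needed, and it relies on the continuous dependence of the wealth processes \eqref{X0}--\eqref{X1} on the strategy. Once this separability is in hand, the pasting is automatic, because the admissibility set $\mathcal A_L^\delta$ is defined precisely through joint measurability in $(\omega,t,l)$, matching the Borel dependence of $k_n(l)$, and no further filtration-enlargement bookkeeping is required.
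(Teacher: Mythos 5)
Your two-inequality architecture is exactly the paper's: the upper bound via sectioning an admissible $\pi\in\mathcal A_L^\delta$ at $l$ and applying Proposition \ref{Pro:espcond} is correct and matches the paper verbatim. The divergence, and the problem, is in the lower bound. The paper invokes Benes's measurable selection theorem directly on the sets $F_\varepsilon(l)$ of $\varepsilon$-optimal strategies; you replace this by a hand-rolled selection through a countable family $\{\pi^{(k)}\}$ with $V_0^\delta(l)=\sup_k J(l,\pi^{(k)})$ for \emph{every} $l$ and a first-hitting index. That separability claim is the entire content of the selection step, and you leave it as an assertion. It is genuinely delicate here for three reasons: (i) passing from a dense family of simple strategies to equality of the suprema requires $J(l,\pi^{(k_j)})\to J(l,\pi)$ along approximating sequences, i.e.\ uniform integrability of $p_T(l)U(X_T^{\pi^{(k_j)}})^{+}$ for an unbounded utility $U$ --- continuity of the wealth in the strategy is not enough; (ii) a single countable family must lie in $\mathcal A_l^\delta$ for \emph{all} $l$ simultaneously, but the admissibility constraint $\pi^0_{\tau_l}\gamma_{\tau_l}<1$ is imposed at the $l$-dependent stopping time $\tau_l$, so rational-valued grids truncated at $-\delta$ do not obviously satisfy it uniformly in $l$; (iii) when $V_0^\delta(l)=+\infty$ (finiteness is only established later, in Theorem \ref{them 4.4}, under a Novikov condition) your index $k_n(l)=\min\{k: J(l,\pi^{(k)})\geq V_0^\delta(l)-1/n\}$ is ill-defined; the paper's $1/\varepsilon$ device in the definition of $F_\varepsilon(l)$ exists precisely to cover this case.

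There is also a concrete bookkeeping omission in the pasting. An element of $\mathcal A_L^\delta$ is a pair $(\pi^0(\cdot),\pi^1(\cdot))$ in which $\pi^0$ is evaluated at $L$ but $\pi^1$ is evaluated at the default \emph{time} $\tau$, not at $L$. Your pasted object $\tilde\pi_n(\omega,t,l)=\pi^{(k_n(l))}(\omega,t)$ parametrizes the after-default component by $l$, so it is not yet an admissible pair; one must re-index it as a function of the default time, which is exactly what the paper does by setting $\widetilde\pi^1_t(x):=\indic_{t>x}\pi^1_t(x,\Lambda_x)$ and using $\Lambda_{\tau_l}=l$ to recover $\widetilde\pi^1_t(\tau_l)=\pi^1_t(\tau_l,l)$. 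This step is fixable but is precisely the ``filtration-enlargement bookkeeping'' you declare unnecessary. In short: the skeleton is right, the upper bound is complete, but the lower bound as written rests on an unproven separability claim that is not easier than the measurable selection theorem it is meant to replace, and the pasted strategy is not shown to belong to $\mathcal A_L^\delta$.
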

\begin{proof}
Assume that $(\pi^0(\cdot),\pi^1(\cdot))$ is an element in $\mathcal A_L^\delta$, then $(\pi^0(l),\pi^1(\cdot))\in\mathcal A_l^\delta$. By Proposition \ref{Pro:espcond} we obtain that 
\[\esssup_{\pi\in\mathcal A^\delta_L}\esp[U(X_T)\, |\, \mathcal{G}^M_0 ]\leq  V_0^\delta(L).\]
For the converse inequality, we shall use measurable selection theorem. For any $\varepsilon>0$ and any $l\in\mathbb ]0,\infty[$, let $F_\varepsilon(l)$ be the set of  strategies $(\pi^0,\pi^1(\cdot))\in\mathcal A_l^\delta$ which are $\varepsilon$-optimal with respect to the problem \eqref{optimisation with tau_l}, namely such that 
\[\esp\left[ {p_T(l)} \left( \indic_{T<\tau_l}U(X_T^{0}(l))+\indic_{T\geq \tau_l} U(X_T^{1}(\tau_l)) \right) \right]\geq\begin{cases}
V_0^\delta(l)-\varepsilon,&\text{if }V_0^\delta(l)<+\infty,\\
1/\varepsilon,&\text{if }V_0^\delta(l)=+\infty.
\end{cases}\]
By a measurable selection theorem (cf. Benes \cite[ Lemma 1]{Benes}), there exists a measurable (with respect to $l$) family $\{(\pi^0(l),\pi^1(\cdot,l))\}_{l\in\mathbb R_+}$ such that $(\pi^0(l),\pi^1(\cdot,l))\in F_\varepsilon(l)$ for any $l >0$. Finally let 
\[\widetilde\pi^0(\cdot):=\pi^0(\cdot),\quad\widetilde\pi^1_t(x):=\indic_{t> x}\pi^1_t(x,\Lambda_x).\]
We have $(\widetilde\pi^0(\cdot),\widetilde{\pi}^1(\cdot))\in\mathcal A_L$ and $\widetilde\pi^1_t(\tau_l)=\pi^1_t(\tau_l,l)$ for any $l >0$ on $\{\tau_l< t\}$. Therefore, by Proposition \ref{Pro:espcond}, 
\[\begin{split}\esp [U(\widetilde X_T)\,|\,\mathcal G_0^M]&=\esp\left[ {p_T(l)} \left( \indic_{T<\tau_l}U(X_T^{0}(l))+\indic_{T\geq \tau_l} U(X_T^{1}(\tau_l)) \right) \right]_{l=L}\\
&\geq\begin{cases}
V_0^\delta(L)-\varepsilon,&\text{if }V_0^\delta(L)<+\infty,\\
1/\varepsilon,&\text{if }V_0^\delta(L)=+\infty.
\end{cases}
\end{split}\]
Since $\varepsilon$ is arbitrary, we obtain $\esp[U(\widetilde X_T)\,|\,\mathcal G_0^M]\geq V_0^\delta(L)$.\finproof
\end{proof}

%
%
%

\section{Solving the optimization problems}

%

In this section, we concentrate on solving the optimization problem \eqref{optimisation with tau_l}  
$$ \sup_{\pi \in\cA^\delta_l} \esp\left[ {p_T(l)} \left( \indic_{T<\tau_l}U(X_T^{0}(l))+\indic_{T\geq \tau_l} U(X_T^{1}(\tau_l)) \right)  \right]$$
for any fixed $l >0$.  We recall that the before-default  and after-default wealth processes $X^0$ and $X^1$ are governed by two control processes $\pi^0$ and $\pi^1$ respectively, so we need to search for a couple of optimal controls $\hat\pi= (\hat\pi^0,\hat\pi^1)$. In Theorem \ref{Thdecomposition} we explain how to decompose the optimization problem into two problems each depending only on $\pi^0$ and on $\pi^1$ respectively.

 \noindent The  after-default optimization problem will be solved firstly, using the after-default filtration $\mathbb{F}^1$ 
 $$\mathbb{F}^1:= (\F_{{\tau_l} \vee t} )_{t \in [ 0,T]}.$$
 Remark that the initial $\sigma$-field of the filtration $\mathbb{F}^1$ is not trivial :  $$\F^1_0=\F_{\tau_l}=\{A\in\mathfrak A: A\cap\{\tau_l\leq t\}\in\F_t,t\in [ 0,T]\}$$ and $\tau_l$ is $\F_{\tau_l}$-measurable. 
All the $\mathbb{F}^1$-adapted processes involved in the after-default optimization problem are indexed on the {right-upper side by the symbol ``$1$''}. In particular,  we  denote by $X^{1,x_l}(\tau_l)$ the solution of the SDE \eqref{X1} defined on the stochastic interval $\lbr
\tau_l,T\rbr$ starting from the $\bF$-stopping time $\tau_l$ {with $\mathcal{F}_{\tau_l}$-measurable initial value $x_l$}. We define by $\cA^1_{l}$   the admissible predictable strategy set  $(\pi^{1}_t(\tau_l), t\in\,\rbr
\tau_l,T\rbr)$ such that $\int_{\tau_l }^{\tau_l\vee T}|\pi_t^{1}(\tau_l)\sigma_t^1(\tau_l)|^2dt<\infty$ a.s.. \\

\noindent  The global before default optimization problem involves the solution of the after-default optimization problem and will be solved in a second step, using the stopped filtration $\mathbb{F}^0$ 
 $$\mathbb{F}^0:= (\F_{{\tau_l} \wedge t} )_{t\in [ 0,T]}.$$
All the $\mathbb{F}^0$-adapted processes involved in the after-default optimization problem are indexed on the {right-upper side by the symbol ``$0$''}. The admissible predictable strategy set $\cA^{0,\delta}_l$ is $(\pi^{0}_t(l),  t\in\lbr0 ,\tau_l\wedge T\rbr  )$ such that $\int_0^{\tau_l \wedge T}|\pi_t^{0}(l)\sigma_t^0|^2dt<\infty$, $\pi_t^{0}(l)\geq -\delta$ and $1>\pi^{0}_{\tau_l}(l)\gamma_{\tau_l}$, a.s.. \\

\begin{Thm}\label{Thdecomposition}
For any $l>0$, we denote by $V_{\tau_l}^{1}(x_l)$ the optimal value of the after-default optimization problem 
\begin{equation}\label{V after default}
V_{\tau_l}^{1}(x_l)=\esssup_{\pi^{1}(\tau_l)\in\cA^1_{l}} \esp[p_T(l) U(X_T^{1,x_l}({\tau_l}))|\F_{\tau_l}],
\end{equation}
then the global optimal value $V_0^\delta(l)$ defined in \eqref{optimisation with tau_l}  can be written as the solution of a
before-default optimization problem as
\begin{equation}\label{V global}
V_0^\delta(l)=\sup_{\pi^{0}\in\cA^{0,\delta}_l}\esp \left[   \indic_{T<\tau_l} p_T(l) U(X_T^{0}(l))+\indic_{T\geq \tau_l} V^{1}_{\tau_l}\big(X_{\tau_l}^{0}(l)(1-\pi^{0}_{\tau_l}(l)\gamma_{\tau_l})\big)  \right].
\end{equation} 
\end{Thm}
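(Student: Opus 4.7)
The plan is to establish the two inequalities between $V_0^\delta(l)$ and the right-hand side of \eqref{V global} separately, pivoting around the observation that the post-jump initial wealth $x_l:=X_{\tau_l-}^0(l)(1-\pi_{\tau_l}^0(l)\gamma_{\tau_l})$ is $\F_{\tau_l}$-measurable. Indeed, $X^0(l)$ is $\bF$-adapted, $\pi^0(l)$ is $\bF$-predictable, and $\gamma$ and $\tau_l$ are $\bF$-adapted/$\bF$-stopping; hence on $\{T\geq\tau_l\}$ the after-default wealth appearing in \eqref{optimisation with tau_l} is exactly $X_T^{1,x_l}(\tau_l)$, i.e.\ the solution of \eqref{X1} with the $\F_{\tau_l}$-measurable initial value $x_l$.

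For the inequality $\leq$, I fix any admissible $(\pi^0,\pi^1)\in\cA_l^\delta$ and apply the tower property conditioning on $\F_{\tau_l}$:
\[\esp\bigl[\indic_{T\geq\tau_l}p_T(l)U(X_T^1(\tau_l))\bigr]=\esp\bigl[\indic_{T\geq\tau_l}\esp[p_T(l)U(X_T^{1,x_l}(\tau_l))\,|\,\F_{\tau_l}]\bigr],\]
where the factor $\indic_{T\geq\tau_l}$ may be pulled out of the inner conditional expectation as it is $\F_{\tau_l}$-measurable. By definition of $V^1_{\tau_l}(x_l)$ as the essential supremum in \eqref{V after default}, the inner conditional expectation is dominated by $V^1_{\tau_l}(x_l)$. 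Adding the before-default term, which depends solely on $\pi^0$, and taking the supremum over $\pi^0\in\cA_l^{0,\delta}$ produces the desired upper bound.

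For the reverse inequality $\geq$ I would imitate the measurable-selection procedure used in the proof of Theorem \ref{Thm:measel}. Fix $\pi^0\in\cA_l^{0,\delta}$ and $\varepsilon>0$. For each deterministic $y>0$ there exists $\pi^1(\cdot,y)\in\cA_l^1$ that is $\varepsilon$-optimal for the after-default problem starting from $y$ (replaced by the usual $1/\varepsilon$-condition when $V^1_{\tau_l}(y)=+\infty$). Benes' measurable selection theorem \cite{Benes}, applied as in Theorem \ref{Thm:measel}, lets me arrange this family to depend measurably on $y$. Setting $\hat\pi^1:=\pi^1(\cdot,x_l)$ then yields an admissible element of $\cA_l^1$, and the concatenated strategy $(\pi^0,\hat\pi^1)\in\cA_l^\delta$ satisfies, by the same tower-property computation as above,
\[\esp\bigl[\indic_{T<\tau_l}p_T(l)U(X_T^0(l))+\indic_{T\geq\tau_l}\bigl(V^1_{\tau_l}(x_l)-\varepsilon\bigr)\bigr]\leq \esp[p_T(l)(\indic_{T<\tau_l}U(X_T^0(l))+\indic_{T\geq\tau_l}U(X_T^{1,x_l}(\tau_l)))],\]
with the evident modification on $\{V^1_{\tau_l}(x_l)=+\infty\}$. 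Letting $\varepsilon\to 0$ and taking the supremum over $\pi^0\in\cA_l^{0,\delta}$ gives the required lower bound on $V_0^\delta(l)$.

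The main obstacle is the measurable-selection step: one must construct an $\varepsilon$-optimal $\pi^1$ depending in a jointly measurable way on the random initial value $x_l$. This is precisely the technicality already handled in Theorem \ref{Thm:measel}, and the argument there (Benes' lemma together with the separate treatment of the case where the value is infinite) carries over essentially verbatim; everything else reduces to a direct application of the tower property.
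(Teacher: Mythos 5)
Your first inequality is exactly the paper's: condition on $\F_{\tau_l}$, pull the $\F_{\tau_l}$-measurable factor $\indic_{T\geq\tau_l}$ out of the inner conditional expectation, dominate that expectation by the essential supremum $V^1_{\tau_l}(x_l)$, and take the supremum over $\pi^0$. For the reverse inequality you diverge from the paper: the paper simply invokes the fact (established later, in Section 4.1, by duality in the complete after-default market, Theorem \ref{them 4.4}) that the essential supremum in \eqref{V after default} is \emph{attained} by some $\hat\pi^1(\tau_l)$, and then uses measurable selection only to extend $\hat\pi^1(\tau_l)$ to a $\cP(\bF)\otimes\mathcal B(\R_+)$-measurable process so that the concatenated pair lies in $\cA_l^\delta$. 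You instead work with $\varepsilon$-optimal strategies, which makes the step self-contained but introduces two points you should not gloss over. First, $V^1_{\tau_l}(y)$ is an essential supremum of \emph{conditional} expectations, so the existence of a single strategy that is $\varepsilon$-optimal almost surely (rather than a countable family whose pointwise supremum attains the esssup) requires the family $\{\esp[p_T(l)U(X_T^{1,y}(\tau_l))\mid\F_{\tau_l}]\}$ to be directed upwards --- true here because strategies can be pasted together along $\F_{\tau_l}$-measurable sets, but it must be said; in Theorem \ref{Thm:measel} the target $V_0^\delta(l)$ is a plain number, so no such issue arises there and the analogy is not quite ``verbatim.'' Second, your selection is parametrized by a deterministic initial wealth $y$, while the wealth you actually feed in, $x_l=X^0_{\tau_l}(l)(1-\pi^0_{\tau_l}(l)\gamma_{\tau_l})$, is a genuinely random $\F_{\tau_l}$-measurable variable; the a.s.\ $\varepsilon$-optimality for each fixed $y$ comes with a $y$-dependent null set, so substituting $y=x_l$ is not automatic. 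The cleaner repair is either to select the $\varepsilon$-optimal strategy directly for the random initial value $x_l$ (the after-default problem is posed for $\F_{\tau_l}$-measurable initial data from the outset), or to follow the paper and use the attained optimizer, whose dependence on the initial value is controlled explicitly by the dual formula. With one of these fixes your argument is complete and yields the same conclusion.
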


\begin{proof}  {Consider firstly an arbitrary admissible strategy $(\pi^0,\pi^1(\cdot))\in\mathcal A_l^\delta$ for a fixed $l > 0$. By definition, $(\pi^0_t,\,t\in\lbr 0,\tau_l\wedge T\rbr)\in\mathcal A_l^{0,\delta}$ and $(\pi^1_t(\tau_l),\,t\in\,\rbr\tau_l,T\rbr)\in\mathcal A_{l}^1$.}
{Taking the conditional expectation with respect to $\F_{\tau_l}$ leads to the following inequalities
\[\begin{split}
&\quad\;\esp \left[ {p_T(l)} \left( \indic_{T<\tau_l}U(X_T^{0}(l))+\indic_{T\geq \tau_l} U(X_T^{1}(\tau_l)) \right) \right] \\
&= \esp\left[   \indic_{T<\tau_l} p_T(l) U(X_T^{0}(l))+  \indic_{T\geq \tau_l}  \esp  \left(  p_T(l) U(X_T^{1}(\tau_l)) | \F_{\tau_l}   \right) \right] \\
&\leq  \esp\left[   \indic_{T<\tau_l} p_T(l) U(X_T^{0}(l))+  \indic_{T\geq \tau_l}  V_{\tau_l}^{1}(X_{\tau_l}^{0}(l)(1-\pi^{0}_{\tau_l}(l)\gamma_{\tau_l})) \right] \\
&\leq V_0^\delta(l). 
\end{split}\]
}

	For the converse inequality, let us assume for the moment that the  esssup  in the definition (\ref{V after default}) is achieved for a given $\hat \pi^{1}(\tau_l) $ (see section \ref{afterdefault} for the proof). Then for any   $(\pi^{0}_t,  t \in\lbr 0, \tau_l\rbr  )$ in $ \cA^{0,\delta}_l$, by a measurable selection theorem, there exists a $\mathcal P(\mathbb F)\otimes\mathcal B(\mathbb R_+)$-measurable process $\pi^1(\cdot)$ such that $\pi^1(\tau_l)=\hat\pi^1(\tau_l)$ on $\rbr \tau_l,T\rbr$ and that $(\pi^0,\pi_1(\cdot))\in {\mathcal A_l^\delta}$, where we have extended $\pi^0$ to an $\mathbb F$-predictable process {on $\mathbb R_+$}. Thus
\begin{eqnarray*}
 V_0^\delta(l) & \geq &  \esp\left[   \indic_{T<\tau_l} p_T(l) U(X_T^{0}(l))+  \indic_{T\geq \tau_l}  V_{\tau_l}^{1}(X_{\tau_l}^{0}(l)(1-\pi^{0}_{\tau_l}(l)\gamma_{\tau_l})) \right] 
\end{eqnarray*}
By taking the supremum over all $(\pi^{0}_t(l), t \in\lbr 0, \tau_l\rbr  )$ $ \in  \cA^{0,\delta}_l$,  we obtain the desired inequality.\finproof
\end{proof}

\begin{Rem}
The supremum  in $V_0^\delta(l)$ can be approached by a sequence of admissible strategies in $\cA^{0,\delta}_l$ (see Proposition \ref{cvversoptimal}), which induces a sequence of strategies in $\cA^{\delta}_L$
such that the corresponding  value functions  converge to  $V_0^{\delta}:=\sup_{\pi\in\mathcal A^{\delta}_L} \esp[U(X_T)]$.  Thus  $V_0^{\delta}=\int_{\R_+} V_0^\delta(l) P^L(dl)$.
\end{Rem}

\begin{Rem}\label{Rklebesgue}
The process $(p_t(l),t \in [ 0,T])$ is essential in our approach of initial information and plays a similar role to the default density process in \cite{jp} $(\alpha_t(\theta),t \in [ 0,T])$ defined as $\alpha_t(\theta)d\theta=\proba(\tau\in d\theta|\F_t)$.  Thus,  to quantify the gain of the insider (as in the forthcoming  Proposition  \ref{propcomparisonafter}),   it is interesting to compare those two processes. 
\begin{itemize}
\item In the particular case where the $\F_t$-conditional law of $L$ admits a density $g_t(l)$ w.r.t. the Lebesgue measure,  the default density can be completely deduced (see \cite[Proposition 3]{ejjz}) in this framework as $\alpha_t(\theta)=\lambda_\theta g_t(\Lambda_\theta)$, $t\geq\theta$ and $\alpha_t(\theta)=\esp[\alpha_{\theta}(\theta)|\F_t]$, $t\leq\theta$ where $\lambda$  is the process given in Section 2. 
\item In the general case, the law of $L$ can have atoms, then the default density does not exist and the approach in \cite{jp} is no longer valid, whereas  the insider's optimization problem can be solved with the process $p_{.}(l)$.
 \end{itemize}\end{Rem}

\subsection{The after-default optimization}\label{afterdefault}
In this section, we focus on the after default optimization problem \eqref{V after default} in the  filtration $\mathbb{F}^1= (\F_{{\tau_l} \vee t} )_{t \in [ 0,T]}$
$$V_{\tau_l}^{1}(x_l)=\esssup_{\pi^{1}(\tau_l)\in\cA^1_{l}} \esp[p_T(l) U(X_T^{1,x_l}({\tau_l}))|\F_{\tau_l}] $$
where $\tau_l$  is an $\bF$-stopping time and the initial wealth  $x_l$ is $\mathcal{F}_{{\tau_l}}$-measurable.
This problem  is similar to  a standard optimization problem, we will extend the results in our framework where the initial time $\tau_l$ is a random time  (and is an $\bF$-stopping time).  
We define the process
$$Z_t(\tau_l)=\exp\Big(-\int^{{\tau_l} \vee t}_{\tau_l}\frac{\mu_u^1(\tau_l)}{\sigma_u^1(\tau_l)}dW_u-\frac 12 \int^{{\tau_l} \vee t}_{\tau_l}\left|\frac{\mu_u^1(\tau_l)}{\sigma_u^1(\tau_l)}\right|^2du\Big), \, \,   \, \, t \in [ 0,T] .$$
This process is an $\mathbb{F}^1$-local martingale (cf. \cite[page 20]{KaS}), we assume that the coefficients  $\mu^1(\tau_l)$ and $\sigma^1(\tau_l)$ satisfy a Novikov criterion (see Theorem \ref{them 4.4} below) so that $(Z_t(\tau_l))_{t \in [ 0,T]}$ is an $\mathbb{F}^1$-martingale.
\begin{Thm}\label{them 4.4}
We assume that for any $l\geq 0$, the coefficients  $\mu_u^1(\tau_l)$ and $\sigma_u^1(\tau_l)$ satisfy the  Novikov criterion \[\esp\bigg[\exp\bigg(\frac{1}{2} \int_{\tau_l}^{\tau_l\vee T} \left| \frac{\mu_u^1(\tau_l)}{\sigma_u^1(\tau_l)}\right|^2du \bigg)\bigg]< \infty.\]
Then the value function process to problem \eqref{V after default} is a.s. finite and is given by
\begin{equation} \label{ValueFunction1initial}
\hat V^{1}_{\tau_l}(x_l)=\esp\bigg[ p_T(l) U\bigg(I\big(\hat y_{\tau_l}(x_l)\frac{ Z_T(\tau_l)}{p_T(l)}\big)\bigg)\,\bigg|\,\F^1_{0}\bigg]
\end{equation}
where $I=(U')^{-1}$ and the Lagrange multiplier $\hat y_{\tau_l}(\cdot)$ is the unique $\F_{\tau_l} \otimes\mathcal B(\R_+)$-measurable solution of the equation 
\[\frac{1}{Z_t(\tau_l)}\esp \left[Z_T(\tau_l)     I\big(\hat y_{\tau_l}(x_l)\frac{ Z_T(\tau_l)}{p_T(l)}\big)      \,\bigg|\,\F^1_0 \right]=x_l.\] 
The corresponding optimal wealth is equal to 
\begin{equation}\label{wealth}\hat X_t^{1,x_l}(\tau_l)=\frac{1}{Z_t(\tau_l)}\esp \left[Z_T(\tau_l)     I\big(\hat y_{\tau_l}(x_l)\frac{ Z_T(\tau_l)}{p_T(l)}\big)      \,\bigg|\,\F^1_t \right], \quad t\in\lbr\tau_l, T\rbr.
\end{equation}

\end{Thm}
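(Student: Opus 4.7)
The plan is to prove Theorem~\ref{them 4.4} by the standard martingale/duality method for utility maximization in a complete Brownian market, but with care taken for the random initial time $\tau_l$ and the non-trivial initial $\sigma$-algebra $\mathcal{F}_{\tau_l}$ of $\mathbb{F}^1$.

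First I would introduce the equivalent measure $\mathbb{Q}_{\tau_l}$ on $\mathcal{F}_T$ with conditional density $Z_T(\tau_l)$ with respect to $\mathbb{P}$. Under Novikov, $Z(\tau_l)$ is an $\mathbb{F}^1$-martingale, and by Girsanov the process $W^{\mathbb{Q}}_t := W_t + \int_{\tau_l}^{\tau_l\vee t}(\mu^1_u(\tau_l)/\sigma^1_u(\tau_l))\,du$ is an $\mathbb{F}^1$-Brownian motion under $\mathbb{Q}_{\tau_l}$. Under this measure, the discounted risky asset $S^1(\tau_l)$ is a local martingale, so for every $\pi^1(\tau_l)\in\mathcal{A}^1_l$ the wealth $X^{1,x_l}(\tau_l)$ is a non-negative $\mathbb{Q}_{\tau_l}$-local martingale starting from $x_l$; hence a supermartingale, producing the conditional budget inequality
\[
\mathbb{E}\bigl[Z_T(\tau_l)\,X_T^{1,x_l}(\tau_l)\bigm|\mathcal{F}_{\tau_l}\bigr]\le x_l.
\]

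Next I would reduce the problem to a static conditional one: maximize $\mathbb{E}[p_T(l)U(X_T)\mid\mathcal{F}_{\tau_l}]$ over non-negative $\mathcal{F}_T$-measurable $X_T$ satisfying the budget constraint. Pointwise Lagrangian maximization of $p_T(l)U(x)-yZ_T(\tau_l)x$ gives the first-order condition $p_T(l)U'(x)=yZ_T(\tau_l)$, hence the candidate terminal wealth $\hat X_T = I(y Z_T(\tau_l)/p_T(l))$ with $I=(U')^{-1}$. To pick the Lagrange multiplier I would study
\[
G(\omega,y):=\mathbb{E}\Bigl[Z_T(\tau_l)\,I\!\bigl(yZ_T(\tau_l)/p_T(l)\bigr)\Bigm|\mathcal{F}_{\tau_l}\Bigr](\omega),
\]
which is a.s. continuous and strictly decreasing in $y$ with limits $+\infty$ at $0^+$ and $0$ at $+\infty$ by the Inada conditions on $U$ and monotone/dominated convergence. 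Strict monotonicity and continuity yield a unique solution $\hat y_{\tau_l}(x_l)(\omega)$ of $G(\omega,\hat y)=x_l$ for each fixed $x_l\in\mathbb{R}_+$; a measurable selection theorem (e.g. Benes, as already invoked in the proof of Theorem~\ref{Thm:measel}) upgrades this to joint $\mathcal{F}_{\tau_l}\otimes\mathcal{B}(\mathbb{R}_+)$-measurability.

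Then I would prove attainability. The filtration $\mathbb{F}^1$ is generated on $\lbr\tau_l,T\rbr$ by $W$ together with $\mathcal{F}_{\tau_l}$, so a conditional martingale representation theorem holds for $\mathbb{Q}_{\tau_l}$-martingales starting at $\tau_l$. Applying it to the $\mathbb{Q}_{\tau_l}$-martingale
\[
M_t:=\mathbb{E}_{\mathbb{Q}_{\tau_l}}\!\bigl[I\!\bigl(\hat y_{\tau_l}(x_l)Z_T(\tau_l)/p_T(l)\bigr)\bigm|\mathcal{F}^1_t\bigr]
\]
yields an integrand $\phi$ such that $dM_t=\phi_t\,dW^{\mathbb{Q}}_t$. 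Setting $\hat\pi^1_t(\tau_l)=\phi_t/(M_t\sigma^1_t(\tau_l))$ defines an admissible strategy whose self-financing wealth equals $M_t$, and by Bayes' formula this gives exactly formula \eqref{wealth}. In particular $\hat X^{1,x_l}_T(\tau_l)=I(\hat y_{\tau_l}(x_l)Z_T(\tau_l)/p_T(l))$, yielding \eqref{ValueFunction1initial}.

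Finally, optimality follows by the standard concavity argument: for any admissible $X_T$,
\[
p_T(l)U(X_T)\le p_T(l)U(\hat X_T)+p_T(l)U'(\hat X_T)(X_T-\hat X_T)=p_T(l)U(\hat X_T)+\hat y_{\tau_l}(x_l)Z_T(\tau_l)(X_T-\hat X_T).
\]
Taking $\mathbb{E}[\cdot\mid\mathcal{F}_{\tau_l}]$ and using the budget (super)martingale inequality together with the equality achieved by $\hat X_T$ gives $\mathbb{E}[p_T(l)U(X_T)\mid\mathcal{F}_{\tau_l}]\le\hat V^1_{\tau_l}(x_l)$, so $\hat V^1_{\tau_l}(x_l)=V^1_{\tau_l}(x_l)$ a.s., which is also finite thanks to the Novikov-based integrability. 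The main obstacle I anticipate is the joint measurability of $\hat y_{\tau_l}(\cdot)$ in $(\omega,x_l)$ and the careful handling of the random initial time when invoking the conditional martingale representation on $\mathbb{F}^1$; everything else is a transcription of the classical Karatzas--Shreve duality argument to the post-default Brownian complete submarket.
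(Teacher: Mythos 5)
Your proposal follows essentially the same route as the paper: the budget constraint obtained from the supermartingale property of $Z(\tau_l)X^{1,x_l}(\tau_l)$, attainability via the martingale representation theorem in the complete post-default Brownian market, and pointwise Lagrangian duality to identify $\hat X_T=I(\hat y_{\tau_l}(x_l)Z_T(\tau_l)/p_T(l))$. The only difference is that where you establish existence, uniqueness and joint measurability of $\hat y_{\tau_l}(\cdot)$ directly (monotonicity and continuity of $y\mapsto \mathbb{E}[Z_T(\tau_l)I(yZ_T(\tau_l)/p_T(l))\mid\F_{\tau_l}]$ plus a measurable selection), the paper simply outsources this step to Proposition 4.5 of \cite{Hil}, which treats precisely the non-trivial initial $\sigma$-field case.
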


\proof  
Note that after the default, the market is complete. The process $ Z(\tau_l) X^{1,x_l}({\tau_l})$ is a  positive local $\mathbb{F}^1$-martingale, and thus a supermartingale, leading to the following budget constraint :
\begin{equation}\label{CB}
\esp \left(   Z_T(\tau_l) X_T^{1,x_l}({\tau_l}) \,\Big|\,\mathcal{F}^1_{0} \right)  \leq x_l. 
\end{equation}
Conversely, the  martingale representation theorem on the brownian filtration implies that for any $\mathcal{F}_T \otimes \mathcal{B}(\mathbb{R}_+)$-measurable  $X_T(\cdot)$, there exists a  $\mathcal{P}(\bF) \otimes \mathcal{B}(\mathbb{R}_+)$- measurable process $\phi(\cdot)$ such that 
$$ X_T(\tau_l) \indic_{\tau_l <T}  = \esp ( X_T(\tau_l)  \indic_{\tau_l <T} |\mathcal{F}_{{\tau_l}} ) + \int^{\tau_l \vee T}_{\tau_l} \phi_u(\tau_l) du
$$
Therefore the after default optimization problem is solved by the mean of the Lagrange multiplier 
$$
V^{1}_{\tau_l}(x_l)=\esp\bigg[ p_T(l) U\bigg(I\big(\hat y_{\tau_l}(x_l)\frac{ Z_T(\tau_l)}{p_T(l)}\big)\bigg)|\F^1_{0}\bigg]$$
and the optimal wealth is given by
\[\hat X_t^{1,x_l}(\tau_l)=\frac{1}{Z_t(\tau_l)}\esp \left[Z_T(\tau_l)     I\big(\hat y_{\tau_l}(x_l)\frac{ Z_T(\tau_l)}{p_T(l)}\big)      |\F^1_{ t}  \right]\]
where $I=(U')^{-1}$ and {{ the Lagrange multiplier $\hat y_{\tau_l}(x_l)$ is $\F_{\tau_l} \otimes\mathcal B(\R_+)$-measurable and satisfies}}  $\hat X_{\tau_l}^{1,x_l}(\tau_l)=x_l$. The {existence, uniqueness and measurability} of the Lagrange multiplier $\hat y_{\tau_l}(x_l)$ in the case of a non trivial initial $\sigma$-field is proved in Proposition 4.5 of Hillairet  \cite{Hil}.
\finproof

We can already give a first analytical comparison  of  the value function process of the after default optimization problem between the initial information and the progressive information. We recall (see \cite[Theorem 4.1]{jp}) that for the progressive information, this value function process at the default time $\theta=\tau_l$ is
\begin{eqnarray}\label{ValueFunction1progressive}
\bar V^{1}_{\tau_l}(x)&=&\esp\bigg[ \alpha_T(\tau_l) U\bigg(I\big(\bar y_{\tau_l}(x)\frac{ Z_T(\tau_l)}{\alpha_T(\tau_l)}\big)\bigg)\,\bigg|\,\F_{\tau_l}\bigg]\\
&& {\mbox{ with }}  \overline y_{\tau_l} {\mbox{ such that }} 
\esp \left[ \frac{Z_T(\tau_l) }{  Z_t(\tau_l) }    I\big(\bar y_{\tau_l}(x)\frac{ Z_T(\tau_l)}{\alpha_T(\tau_l)}\big)      \,\bigg|\,\F_{ \tau_l}  \right] =x \nonumber
\end{eqnarray}
which can be compared to the value function (\ref{ValueFunction1initial}) for an initial information.
To do this, we assume that the $\mathcal{F}_t$-conditional law of $L$ admits a density with respect to the Lebesgue measure, denoted as $g_t(l)$ (see Remark \ref{Rklebesgue}). Using Assumption \ref{hyp1}, it is sufficient to assume   that the law of $L$ admits a density with respect to the Lebesgue measure.

\begin{Pro}\label{propcomparisonafter}
The value function processes of the after-default optimization problem  for the initial information $\hat V^{1}_{\tau_l}(x)$ and  for   the progressive information  $\bar V^{1}_{\tau_l}(x)$ satisfy
$$\frac{\bar V^{1}_{\tau_l}(x)}{ \hat V^{1}_{\tau_l}(x)}= g_0(l) \lambda_{\tau_l} $$
where $g_0(.)$ denotes the density of the law of $L$ with respect to the Lebesgue measure and is supposed to be strictly positive.
In addition, the ratio $\frac{\alpha_T(\tau_l)}{p_T(l)}$ is $\F_{\tau_l}$-measurable and is also equal to $\frac{\bar V^{1}_{\tau_l}(x)}{ \hat V^{1}_{\tau_l}(x)}$.
\end{Pro}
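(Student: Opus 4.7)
The plan is to exploit two structural identities: first the explicit representation of the default density $\alpha_T(\tau_l)$ recalled in Remark \ref{Rklebesgue}, and second the relation between the conditional density process $p_T(l)$ and the Lebesgue density $g_t(l)$ of the $\F_t$-conditional law of $L$. Then I transfer the ratio obtained at the level of the densities to the Lagrange multipliers, and finally to the value functions themselves.

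First I would observe that, under the strict monotonicity of $\Lambda$ on $[0,T]$, the $\bF$-stopping time $\tau_l=\inf\{t:\Lambda_t\ge l\}$ satisfies $\Lambda_{\tau_l}=l$. Combining this with the formula $\alpha_t(\theta)=\lambda_\theta g_t(\Lambda_\theta)$ for $t\ge\theta$ (see Remark \ref{Rklebesgue}) gives
\[
\alpha_T(\tau_l)=\lambda_{\tau_l}\,g_T(l).
\]
Next, since $P^L_t(\omega,dx)=g_t(\omega,x)dx$ and $P^L(dx)=g_0(x)dx$, the definition \eqref{pt} yields $p_T(l)=g_T(l)/g_0(l)$. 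Thus
\[
\frac{\alpha_T(\tau_l)}{p_T(l)}=\lambda_{\tau_l}\,g_0(l),
\]
which is indeed $\F_{\tau_l}$-measurable, as announced.

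The heart of the argument is then to identify the two Lagrange multipliers. Denote $c:=\lambda_{\tau_l}g_0(l)$, so $\alpha_T(\tau_l)=c\,p_T(l)$ with $c$ being $\F_{\tau_l}$-measurable. Evaluating the defining equation of $\bar y_{\tau_l}(x)$ at $t=\tau_l$ (using $Z_{\tau_l}(\tau_l)=1$) gives
\[
\esp\Bigl[Z_T(\tau_l)\,I\bigl(\tfrac{\bar y_{\tau_l}(x)}{c}\cdot\tfrac{Z_T(\tau_l)}{p_T(l)}\bigr)\,\Big|\,\F_{\tau_l}\Bigr]=x,
\]
which is exactly the equation characterizing $\hat y_{\tau_l}(x)$ with $\hat y_{\tau_l}(x)$ replaced by $\bar y_{\tau_l}(x)/c$. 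By the uniqueness of the Lagrange multiplier (Proposition 4.5 of \cite{Hil}), one concludes $\bar y_{\tau_l}(x)=c\,\hat y_{\tau_l}(x)$.

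Plugging this identification back into \eqref{ValueFunction1progressive} and using $\alpha_T(\tau_l)=c\,p_T(l)$, the factor $c$ cancels inside the argument of $I$, and one obtains
\[
\bar V^{1}_{\tau_l}(x)=c\,\esp\Bigl[p_T(l)\,U\bigl(I\bigl(\hat y_{\tau_l}(x)\tfrac{Z_T(\tau_l)}{p_T(l)}\bigr)\bigr)\,\Big|\,\F_{\tau_l}\Bigr]=c\,\hat V^{1}_{\tau_l}(x),
\]
which gives the desired ratio $\bar V^{1}_{\tau_l}(x)/\hat V^{1}_{\tau_l}(x)=\lambda_{\tau_l}g_0(l)$. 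The only delicate point is checking that $\F^1_0=\F_{\tau_l}$ (stated in the text) so that both conditional expectations are taken with respect to the same $\sigma$-field; the rest is then essentially an $\F_{\tau_l}$-measurable scaling that factors cleanly through the utility functional because $I$ is homogeneous under the same scaling of its argument and the density.
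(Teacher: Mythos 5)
Your proof is correct and follows essentially the same route as the paper's: compute $p_T(l)=g_T(l)/g_0(l)$ and $\alpha_T(\tau_l)=\lambda_{\tau_l}g_T(l)$ to get the $\F_{\tau_l}$-measurable ratio $c=\lambda_{\tau_l}g_0(l)$, identify $\bar y_{\tau_l}(x)=c\,\hat y_{\tau_l}(x)$ by uniqueness of the Lagrange multiplier, and pull the factor $c$ out of the conditional expectation. The only cosmetic difference is that you make explicit the step $\Lambda_{\tau_l}=l$, which the paper uses implicitly.
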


\proof
On the stochastic interval $\lbr \tau_l,T\rbr$, the density of probability $Z_t(\tau_l)$ is the same for the initial and the progressive information, but not necessarily the Lagrange multipliers, although they satisfy the  same type of equation. More precisely,  $\bar y_{\tau_l}(x)$ is the unique  $\F_{\tau_l} \otimes\mathcal B(\R_+)$-measurable solution of 
\begin{equation}\label{aa}\esp \left[ \frac{Z_T(\tau_l) }{  Z_t(\tau_l) }    I\big(\bar y_{\tau_l}(x)\frac{ Z_T(\tau_l)}{\alpha_T(\tau_l)}\big)      |\F_{ \tau_l}  \right] =x\end{equation} and 
$\hat y_{\tau_l}(x)$ is the unique  $\F_{\tau_l} \otimes\mathcal B(\R_+)$-measurable solution of 
\begin{equation}\label{bb}\esp \left[ \frac{Z_T(\tau_l) }{  Z_t(\tau_l) }    I\big(\hat y_{\tau_l}(x)\frac{ Z_T(\tau_l)}{p_T(l)}\big)      |\F_{ \tau_l}  \right] =x.\end{equation} 
We recall that $g_T(l)$ is the density of the $\mathcal{F}_T$-conditional law of $L$  with respect to the Lebesgue measure. According to \eqref{pt} and Remark \ref{Rklebesgue},  $p_T(l)=\frac{g_T(l)}{g_0(l)}$ and $\alpha_T(\tau_l)=\lambda_{\tau_l} g_T(\Lambda_{\tau_l})=\lambda_{\tau_l} g_T(l) $. Thus the ratio $\frac{\alpha_T(\tau_l)  }{ p_T(l)}= \lambda_{\tau_l}g_0(l)$  is $\F_{ \tau_l}$-measurable. Furthermore, \eqref{aa} is equivalent to \[\esp \left[ \frac{Z_T(\tau_l) }{  Z_t(\tau_l) }    I\big(\bar y_{\tau_l}(x)\frac{p_T(l)}{\alpha_T(\tau_l)}\frac{ Z_T(\tau_l)}{p_T(l)}\big)      |\F_{ \tau_l}  \right] =x\] so the unicity of the $\F_{\tau_l}$-measurable solution of the equation \eqref{bb} implies that the  $\bar y_{\tau_l}(x)\frac{p_T(l)}{\alpha_T(\tau)}=\hat y_{\tau}(x)$. Therefore,   $\frac{ \bar y_{\tau_l}(x)  }{ \hat y_{\tau_l}(x)}= \lambda_{\tau_l}g_0(l)$ and 
$U\big(I\big(\bar y_{\tau_l}(x)\frac{ Z_T(\tau_l)}{\alpha_T(\tau_l)}\big)\big)=U\big(I\big(\hat y_{\tau_l}(x)\frac{ Z_T(\tau_l)}{p_T(l)}\big)\big)$. We conclude the proof by using again that the ratio $\frac{\alpha_T(\tau_l)  }{ p_T(l)}= \lambda_{\tau_l}g_0(l)$  is $\F_{ \tau_l}$-measurable.
\finproof

\begin{Rem}\label{remarque atom}
\begin{enumerate}
 \item 
The constant $\frac{1}{g_0(l)}$ is a threshold to compare with $\lambda_{\tau_l}$. For a given scenario, if $\lambda_{\tau_l}$ is smaller than $\frac{1}{g_0(l)}$, which is often the case in practice, then the absolute value of the  value function for the initial information is greater than the one for the  progressive information. 
In this case, the insider puts a higher weight $p_T(l)$ on the after-default optimization problem \eqref{V after default}, compared to a standard investor whose weight is $\alpha_T(\tau_l)$.  This means that $p_t(l)$ conveys more information than $\alpha_t(\tau_l) $. Moreover, the smaller $g_0(l)$ is, the  greater  is the gain of an insider. The interpretation is as follows:  small value of $g_0(l)$ means that $\mathbb{P}(L \in dl)$  is small, thus the  insider, who knows the real value of $L$, has very relevant information.   

\item Concerning the optimal wealth $\hat X^1(\tau_l)$ in \eqref{wealth}, we can prove, using the same argument as in the proof of Proposition \ref{propcomparisonafter} that  starting from a same wealth $x_l$ at the default time $\tau_l$, the optimal  wealth process of  the after-default optimization problem is the same for the initial and the progressive information, i.e. $\hat X^{1,x_l}(\tau_l)=\bar X^{1,x_l}(\tau_l)$. This result is not surprising, since after the  default, the two information flows coincide.  But naturally the input wealth of the   after-default optimization problem will not be the same for the two information flows since they are not the same before $\tau_l$.
\end{enumerate}
\end{Rem}

\noindent 
We will quantify numerically the gain of an insider compared to a standard investor for the global optimization problem. We now consider, as in \cite{jp}, Constant Relative Risk Aversion (CRRA)  utility functions 
$$
U(x) = \frac{x^p}{p}, \;\;\;\;\;  0<p<1,   \; x >0
$$
and $I(x)=x^{\frac{1}{p-1}}$. Direct computations from the previous theorem yield the optimal wealth

$$\hat X_t^{1,x_l}(\tau_l)=\frac{x_l}{  Z_t(\tau_l)  }
 \frac{\esp \left[p_T(l) \left( \frac{ Z_T(\tau_l)}{p_T(l)}\right) ^{\frac{p}{p-1}} \,\Big|\,\F^1_{t} \right] }{\esp \left[p_T(l) \left( \frac{ Z_T(\tau_l)}{p_T(l)}\right) ^{\frac{p}{p-1}} \,\Big|\,\F_{\tau_l} \right] },\quad t\in\lbr\tau_l, T\rbr        $$ 

and the optimal value function
\begin{equation}\label{V1 CRRA}
\hat V^{1}_{\tau_l}(x_l)= \frac{x_l^p}{ p } \left(    \esp \left[p_T(l) \left( \frac{ Z_T(\tau_l)}{p_T(l)}\right) ^{\frac{p}{p-1}} \,\bigg|\,\F_{\tau_l} \right]            \right)^{1-p} 
=:  \frac{x_l^p}{ p } K_{\tau_l}
\end{equation}
where the $\F_{\tau_l}$-measurable random variable $K_{\tau_l}=\left(    \esp \left[p_T(l) \left( \frac{ Z_T(\tau_l)}{p_T(l)}\right) ^{\frac{p}{p-1}} \,\bigg|\,\F_{\tau_l} \right]            \right)^{1-p}$ only depends on the stopping time $\tau_l$ and on market parameters.

\subsection{The before-default optimization}
We now consider the optimization problem \eqref{V global} with CRRA utility functions.  Using  \eqref{V1 CRRA}, we have to solve :

\begin{equation}\label{globalCRRA}
 {V^{\delta}_0}(l)=\sup_{\pi^{0}\in\cA_l^{0,\delta}}\esp\bigg[\indic_{T<\tau_l} p_T(l) U( X_T^{0}(l)) + \indic_{T\geq \tau_l}   K_{\tau_l} U\big( X^{0}_{\tau_l}(l)(1-\pi^{0}_{\tau_l}(l)\gamma_{\tau_l}) \big) \bigg]
\end{equation}
where the $\F_{\tau_l}${-measurable} random variable $K_{\tau_l}$ does not depend on the control process $\pi^{0}\in\cA_l^{0,\delta}$.

We will use a dynamic programming approach. Recall that  $\mathbb{F}^0=(\F_{{\tau_l} \wedge t } )_{t \in [ 0,T]}$ is  the stopped filtration at the default time. 
Since $\indic_{t < \tau_l}$ is $ \F_{{\tau_l} \wedge t}$-measurable, we have
\begin{align*}
&\quad \esp\big[\indic_{T<\tau_l} p_T(l) U( X_T^0(l)) + \indic_{T\geq \tau_l}   K_{\tau_l} U\big( X^{0}_{\tau_l}(l)(1-\pi^{0}_{\tau_l}(l)\gamma_{\tau_l}) \big)      |  \F_{{\tau_l} \wedge t} \big]\\
&=\indic_{t \geq \tau_l} K_{\tau_l} U\left( X^{0}_{\tau_l}(l)(1-\pi^{0}_{\tau_l}(l)\gamma_{\tau_l})\right) \\
& \, \, \, \, \, \, + \esp\big[\indic_{T<\tau_l} p_T(l) U( X_T^0(l))
 + \indic_{ t< \tau_l \leq T}   K_{\tau_l} U\big( X^{0}_{\tau_l}(l)(1-\pi^{0}_{\tau_l}(l)\gamma_{\tau_l})\big )       |  \F_{{\tau_l} \wedge t} \big]
\end{align*}
For any $\nu$ $\in$ $\cA_l^{0,\delta}$, we introduce   the family of
$\mathbb{F}^0$-adapted processes
\begin{equation}
\mathcal X_t(\nu):=\esssup_{\pi^{0}\in \cA_l^{0,\delta}(t,\nu)}
\esp\bigg[\indic_{T<\tau_l} p_T(l) U( X_T^0(l)) + \indic_{ t< \tau_l \leq T}   K_{\tau_l} U\big( X^{0}_{\tau_l}(l)(1-\pi^{0}_{\tau_l}(l)\gamma_{\tau_l})  \big)     |  \F_{{\tau_l} \wedge t} \bigg]
\end{equation}
where $ \cA_l^{0,\delta}(t,\nu)$ is  the set of controls coinciding with $\nu$
until time $t$: for any $t$ $\in$ $[0,T]$, $\nu$ $\in$ $\cA_l^{0,\delta}$,
$\cA_l^{0,\delta}(t,\nu)=\{ \pi^{0} \in \cA_l^{0,\delta}: \pi^{0}_{.\wedge t} = \nu_{.\wedge t} \}$,
$X^{\nu,0}$ denotes the wealth process derived from the control $\nu$ $\in$ $\cA_l^{0,\delta}$.
We have $V_0^{\delta}(l)=\mathcal X_0(\nu) $ for any $\nu$ $\in$ $\cA_l^{0,\delta}$. 

In the following result, we show that the short selling constraint plays a crucial role in the optimization. In fact, it is the optimal strategy at the default time.  
\begin{Pro}\label{cvstrategie}
For any $\pi^{0}\in\cA_l^{0,\delta}$, there exists a sequence of strategies $ (\pi^0_n \in \cA_l^{0,\delta})_{n\in\mathbb N^*}$ such that 
$\pi^{0}_{n, \tau_l} = -\delta$  and   $$\lim_{n \rightarrow + \infty}  \mathcal X_0(\pi^{0}_n ) \geq \mathcal X_0(\pi^{0}  ).$$
\end{Pro}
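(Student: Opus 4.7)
The plan is to approximate any $\pi^{0}\in\cA_l^{0,\delta}$ by a sequence $(\pi^0_n)_n$ that agrees with $\pi^0$ on $\lbr 0,\tau_l^n\rbr$ for an announcing sequence $\tau_l^n\uparrow\tau_l$, and equals $-\delta$ on $\rbr\tau_l^n,\tau_l\rbr$, so that $\pi^0_{n,\tau_l}=-\delta$. The motivation is that the post-default factor $1-\pi^0_{\tau_l}\gamma_{\tau_l}$, and hence the after-default utility $K_{\tau_l}U\bigl(X^0_{\tau_l-}(l)(1-\pi^0_{\tau_l}\gamma_{\tau_l})\bigr)$, is pointwise maximized over the admissibility constraint $\pi^0_{\tau_l}\geq -\delta$ at $\pi^0_{\tau_l}=-\delta$, since $\gamma_{\tau_l}>0$ and $U$ is increasing. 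An announcing sequence is required to turn this pointwise optimality into an honest $\mathbb F$-predictable modification.

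Since $\tau_l=\inf\{t:\Lambda_t\geq l\}$ is the hitting time of $l$ by the continuous strictly increasing process $\Lambda$, it is a predictable $\mathbb F$-stopping time; one can take $\tau_l^n:=\inf\{t:\Lambda_t\geq l(1-1/n)\}$, which strictly increases to $\tau_l$. Setting
\[\pi^0_{n,t}:=\pi^0_t\,\indic_{\{t\leq\tau_l^n\}}-\delta\,\indic_{\{\tau_l^n<t\leq\tau_l\}},\]
one checks that $\pi^0_n$ is $\mathbb F$-predictable, bounded below by $-\delta$, satisfies $|\pi^0_{n,t}\sigma^0_t|\leq(|\pi^0_t|+\delta)|\sigma^0_t|$ in $L^2(dt)$, and $\pi^0_{n,\tau_l}\gamma_{\tau_l}=-\delta\gamma_{\tau_l}<1$, so that $\pi^0_n\in\cA_l^{0,\delta}$. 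Since $\pi^0_n=\pi^0$ on $\lbr 0,\tau_l^n\rbr$, one has $X^{n,0}_{\tau_l^n}(l)=X^0_{\tau_l^n}(l)$. Integrating \eqref{X0} on $(\tau_l^n,\tau_l)$ with $\pi^0_n\equiv-\delta$, the drift, diffusion and quadratic-variation integrals over the shrinking random interval $(\tau_l^n,\tau_l)$ converge a.s.\ to zero by continuity of the stochastic integral and the integrability hypotheses on $\mu^0,\sigma^0$; combined with the continuity of $X^0$ on $\lbr 0,\tau_l\rbr$, one concludes $X^{n,0}_{\tau_l-}(l)\to X^0_{\tau_l-}(l)$ a.s. Similarly, on $\{T<\tau_l\}$, eventually $T<\tau_l^n$, so $X^{n,0}_T(l)=X^0_T(l)$ pointwise for $n$ large enough.

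It remains to pass to the limit in $\mathcal X_0(\pi^0_n)$. The integrand converges a.s.\ to
\[\indic_{T<\tau_l}\,p_T(l)\,U(X^0_T(l))+\indic_{\tau_l\leq T}\,K_{\tau_l}\,U\bigl(X^0_{\tau_l-}(l)(1+\delta\gamma_{\tau_l})\bigr),\]
which pointwise dominates the integrand for $\pi^0$ by the maximization property discussed above (here $1+\delta\gamma_{\tau_l}\geq 1-\pi^0_{\tau_l}\gamma_{\tau_l}$). In the CRRA setting $U\geq 0$, and $p_T(l),K_{\tau_l}\geq 0$, so Fatou's lemma yields $\liminf_n\mathcal X_0(\pi^0_n)\geq\mathcal X_0(\pi^0)$, which gives the claim. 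The main technical obstacle is the construction of an $\mathbb F$-predictable announcing sequence (leveraging the predictability of $\tau_l$ in the Brownian filtration) and the control of the stochastic integral over the shrinking interval $(\tau_l^n,\tau_l)$; the minor discrepancy of the integrand on the vanishing set $\{\tau_l^n\leq T<\tau_l\}$ disappears in the limit by the same a.s.\ convergence argument before Fatou is applied.
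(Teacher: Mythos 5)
Your proposal is correct and follows essentially the same route as the paper: announce $\tau_l$ by stopping times $\tau_l^n\uparrow\tau_l$, switch the strategy to $-\delta$ on $\rbr\tau_l^n,\tau_l\rbr$, note that the stochastic and Lebesgue integrals over the shrinking interval vanish a.s.\ so the wealth converges while the jump factor improves to $1+\delta\gamma_{\tau_l}\geq 1-\pi^0_{\tau_l}\gamma_{\tau_l}$, and pass to the limit in the expectation. The only cosmetic differences are that you build the announcing sequence explicitly from $\Lambda$ and invoke Fatou on the nonnegative CRRA integrand, where the paper asserts the sequence abstractly and splits the limit into a dominated-convergence step for the before-default term and a ratio computation for the at-default term.
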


\proof
Let $(\tau_n)_{n \in \mathbb{N}^*}$ where $\tau_n<\tau_l$ be an increasing sequence of $\mathbb{F}$-stopping times  that converge to $\tau_l$.
Starting from a strategy $ \pi^0  \in \cA^{0,\delta}_l$, we define another strategy $ \pi^0_n =\indic_{\lbr 0, \tau_n \rbr} \pi^0 - \indic_{\rbr \tau_n, \tau_l \rbr } \delta $
 that remains in $\cA^{0,\delta}_l$. We denote as $X^0(l)$ and $X_n^0(l)$ the corresponding wealth before-default, and as 
$\mathcal X_0(\pi^{0})$ and $\mathcal X_0(\pi_n^{0})$ the corresponding value function for those strategies of the before default global optimization problem.
On the one hand,  by dominated convergence theorem, it is easy to check that
$$\lim_{n \rightarrow + \infty} \esp\big[\indic_{T<\tau_l} p_T(l) |U( X_T^{0}(l))  - U( X_{n,T}^{0}(l))|   \big] =0.$$
On the other hand, on the event 
$\{T\geq \tau_l\}$
{\small 
 $$ \frac{X_{n,\tau_l}^{0}(l) (1-\pi^{0}_{n,\tau_l}(l)\gamma_{\tau_l}) }{ X_{\tau_l}^{0}(l)(1-\pi^{0}_{\tau_l}(l)\gamma_{\tau_l}) }=
\frac{ (1+\delta\gamma_{\tau_l}) }{(1-\pi^{0}_{\tau_l}(l)\gamma_{\tau_l}) }
\exp\left( \int_{\tau_n}^{\tau_l}   (-(\pi_s^0+\delta) \mu^0_s     +\frac{1}{2} (\sigma_s^0)^2(   (\pi_s^0)^2  -\delta^2    )  ) ds -   \int_{\tau_n}^{\tau_l} (\pi_s^0+\delta) \sigma^0_s   dW_s       \right)$$}
 $ \pi^0  \in \cA^{0,\delta}_l$ implies that  $\pi^{0}_{\tau_l}(l) \geq -\delta$ and  
$\frac{ (1+\delta\gamma_{\tau_l}) }{(1-\pi^{0}_{\tau_l}(l)\gamma_{\tau_l}) }\geq 
 1  $ (because $\gamma>0$).
$\tau_n  \rightarrow  \tau_l$ implies that  the exponential term tends to $1$ a.s. 
Thus $ \lim_{ n \rightarrow + \infty } \frac{X_{n,\tau_l}^{0}(l) (1-\pi^{0}_{n,\tau_l}(l)\gamma_{\tau_l}) }{ X_{\tau_l}^{0}(l)(1-\pi^{0}_{\tau_l}(l)\gamma_{\tau_l}) }  \geq 
 1   $ and 
$$\lim_{n \rightarrow + \infty}
\esp\bigg[ \indic_{T\geq \tau_l}   K_{\tau_l}   U( X^{0}_{n,\tau_l}(l)(1-\pi^{0}_{n, \tau_l}(l)\gamma_{\tau_l}) )  \bigg] \geq 
 \esp\bigg[\indic_{T\geq \tau_l}   K_{\tau_l} U( X^{0}_{\tau_l}(l)(1-\pi^{0}_{\tau_l}(l)\gamma_{\tau_l}) ) \bigg].$$
Consequently, 
    $$\lim_{n \rightarrow + \infty}  \mathcal X_0(\pi^{0}_n ) \geq \mathcal X_0(\pi^{0}  ).$$

\finproof

We now characterize the optimal strategy process. From the dynamic programming principle, the following result holds:

\begin{Lem}\label{dynprosurmar}For any $\nu\in\cA_l^{0,\delta}$, the process
$$\xi_t^\nu:=\mathcal X_t(\nu)+\indic_{t \geq \tau_l} K_{\tau_l} U( X^{\nu,0}_{\tau_l}(l)(1-\nu_{\tau_l}(l)\gamma_{\tau_l})), \, \, \,  \, \, \, 0 \leq  t \leq T$$
is an $\mathbb{F}^0$-supermartingale. Furthermore, the optimal strategy  $\widehat{\pi}^{0}$ is characterized by the martingale  property : $(\xi_t^{\widehat{\pi}^{0}})_{0 \leq  t \leq T}$ is 
an $\mathbb{F}^0$-martingale.
\end{Lem}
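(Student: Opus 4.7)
The plan is to establish this as a standard dynamic programming statement, exploiting the natural decomposition of the payoff around the two time points. First I would record the trivial observations: $\mathcal X_t(\nu)$ and the indicator term are both $\mathcal F^0_t$-measurable, and on the event $\{\tau_l\le t\}\subset\{\tau_l\le T\}$ the indicators $\indic_{T<\tau_l}$ and $\indic_{t<\tau_l\le T}$ vanish, so $\mathcal X_t(\nu)=0$ and $\xi^\nu_t$ reduces to $K_{\tau_l}U\bigl(X^{\nu,0}_{\tau_l}(l)(1-\nu_{\tau_l}(l)\gamma_{\tau_l})\bigr)$, which is $\mathcal F_{\tau_l\wedge t}$-measurable; in particular on $\{\tau_l\le s\}$ the statement $\mathbb E[\xi^\nu_t\mid\mathcal F^0_s]=\xi^\nu_s$ is immediate. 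The substantive work takes place on $\{\tau_l>s\}$.

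For $s\le t\le T$ and $\pi^0\in\cA_l^{0,\delta}(t,\nu)\subset\cA_l^{0,\delta}(s,\nu)$, introduce
\[J_t(\pi^0):=\esp\bigl[\indic_{T<\tau_l}p_T(l)U(X_T^{\pi^0,0}(l))+\indic_{t<\tau_l\le T}K_{\tau_l}U\bigl(X^{\pi^0,0}_{\tau_l}(l)(1-\pi^0_{\tau_l}(l)\gamma_{\tau_l})\bigr)\,\bigm|\,\mathcal F^0_t\bigr].\]
Splitting $\indic_{s<\tau_l\le T}=\indic_{s<\tau_l\le t}+\indic_{t<\tau_l\le T}$ and applying the tower property gives
\[J_s(\pi^0)=\esp\bigl[J_t(\pi^0)+\indic_{s<\tau_l\le t}K_{\tau_l}U\bigl(X^{\nu,0}_{\tau_l}(l)(1-\nu_{\tau_l}(l)\gamma_{\tau_l})\bigr)\,\bigm|\,\mathcal F^0_s\bigr],\]
where I replaced $\pi^0$ by $\nu$ inside the $K_{\tau_l}$-term because on $\{\tau_l\le t\}$ both strategies coincide at $\tau_l$ (they agree on $[\![0,t]\!]$). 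The next step is a directed-upward (``lattice'') argument: given two controls in $\cA_l^{0,\delta}(t,\nu)$, paste them via a set $A\in\mathcal F^0_t$ on the stochastic interval $]\!]t,\tau_l\wedge T]\!]$; predictability of the paste is preserved because $A\cap\{\tau_l>t\}\in\mathcal F_t$ and $\tau_l$ is an $\mathbb F$-stopping time. This allows one to interchange $\mathrm{esssup}$ and conditional expectation and deduce
\[\esp[\mathcal X_t(\nu)\mid\mathcal F^0_s]=\esssup_{\pi^0\in\cA_l^{0,\delta}(t,\nu)}\esp[J_t(\pi^0)\mid\mathcal F^0_s].\]
Since $\cA_l^{0,\delta}(t,\nu)\subset\cA_l^{0,\delta}(s,\nu)$, taking $\mathrm{esssup}$ over $\pi^0\in\cA_l^{0,\delta}(t,\nu)$ in the tower-property identity yields
\[\esp[\mathcal X_t(\nu)+\indic_{s<\tau_l\le t}K_{\tau_l}U(\cdots)\mid\mathcal F^0_s]\le\mathcal X_s(\nu),\]
and, using $\indic_{t\ge\tau_l}=\indic_{\tau_l\le s}+\indic_{s<\tau_l\le t}$ together with the $\mathcal F^0_s$-measurability of $\indic_{\tau_l\le s}K_{\tau_l}U(\cdots)$, this is exactly $\esp[\xi^\nu_t\mid\mathcal F^0_s]\le\xi^\nu_s$.

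For the optimality characterization, observe that $\xi^\nu_0=\mathcal X_0(\nu)=V_0^\delta(l)$ for every $\nu$, while $\xi^\nu_T$ is precisely the random payoff inside $V_0^\delta(l)$ so that $\esp[\xi^\nu_T]=\esp[\indic_{T<\tau_l}p_T(l)U(X_T^{\nu,0}(l))+\indic_{T\ge\tau_l}K_{\tau_l}U(\cdots)]$. If $\widehat\pi^0$ is optimal, $\esp[\xi^{\widehat\pi^0}_T]=V_0^\delta(l)=\esp[\xi^{\widehat\pi^0}_0]$, and a supermartingale with constant expectation is a martingale. Conversely, the martingale property for some $\widehat\pi^0$ gives $\esp[\xi^{\widehat\pi^0}_T]=V_0^\delta(l)$, so $\widehat\pi^0$ attains the supremum.

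The main obstacle is the directed-upward step, since admissibility requires that the concatenated control still be $\mathbb F$-predictable with $\pi^0\ge-\delta$ and $\pi^0_{\tau_l}\gamma_{\tau_l}<1$; this is where one has to verify carefully that the splitting set $A\in\mathcal F^0_t$ extends to an $\mathcal F_t$-measurable set on $\{\tau_l>t\}$ (using that $\tau_l$ is an $\mathbb F$-stopping time), so that the pasted process lies in $\cA_l^{0,\delta}(t,\nu)$. The integrability needed to pass the conditional expectation through $\mathrm{esssup}$ follows from the CRRA structure and the fact that $p_T(l)$ is already a density.
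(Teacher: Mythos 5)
Your proof is correct and follows essentially the same dynamic programming argument as the paper: split the indicator $\indic_{t\geq\tau_l}$ across $s$ and $t$, apply the tower property, and use the inclusion $\cA_l^{0,\delta}(t,\nu)\subset\cA_l^{0,\delta}(s,\nu)$ to get the supermartingale inequality, with the martingale property characterizing optimality. You are in fact more explicit than the paper on two points it leaves implicit --- the upward-directedness (pasting) argument needed to interchange the essential supremum with the conditional expectation, and the ``supermartingale with constant endpoint expectations is a martingale'' step --- both of which are sound.
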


\proof 
Let $s, t$ be two times such that $s\leq t \leq T$.
\begin{align*}
& \quad\,\,\esp\bigg[  \mathcal X_t(\nu)+\indic_{t \geq \tau_l} K_{\tau_l} U( X^{\nu,0}_{\tau_l}(l)(1-\nu_{\tau_l}(l)\gamma_{\tau_l}) )                |  \F_{{\tau_l} \wedge s} \bigg]  \\ 
&\, \, =   \esp\bigg[  \mathcal X_t(\nu)+\indic_{s< \tau_l \leq t} K_{\tau_l} U( X^{\nu,0}_{\tau_l}(l)(1-\nu_{\tau_l}(l)\gamma_{\tau_l}) )                |  \F_{{\tau_l} \wedge s} \bigg] 
+ \indic_{s \geq \tau_l} K_{\tau_l} U( X^{\nu,0}_{\tau_l}(l)(1-\nu_{\tau_l}(l)\gamma_{\tau_l}) )
\end{align*}
We make explicit the conditional expectation :
\begin{align}\label{surmgle}
&  \quad \,\,\,\esp\bigg[  \mathcal X_t(\nu)+\indic_{s< \tau_l \leq t} K_{\tau_l} U( X^{\nu,0}_{\tau_l}(l)(1-\nu_{\tau_l}(l)\gamma_{\tau_l}))                 |  \F_{{\tau_l} \wedge s} \bigg] \nonumber \\
& \, \,=   \esp\bigg[   
\esssup_{\pi^{0}\in \cA_l^{0,\delta}(t,\nu)}
\esp\big[\indic_{T<\tau_l} p_T(l) U( X_T^0(l)) + \indic_{ t< \tau_l \leq T}   K_{\tau_l} U( X^{0}_{\tau_l}(l)(1-\pi^{0}_{\tau_l}(l)\gamma_{\tau_l})   )     |  \F_{{\tau_l} \wedge t} \big] \nonumber \\
&\, \, \, \, \hspace*{3cm}  +\indic_{s< \tau_l \leq t} K_{\tau_l} U( X^{\nu,0}_{\tau_l}(l)(1-\nu_{\tau_l}(l)\gamma_{\tau_l}) )                |  \F_{{\tau_l} \wedge s} \bigg] \nonumber  \\
& \leq \esssup_{\pi^{0}\in \cA_l^{0,\delta}(s,\nu)}
\esp\big[\indic_{T<\tau_l} p_T(l) U( X_T^0(l)) + \indic_{ s< \tau_l \leq T}   K_{\tau_l} U( X^{0}_{\tau_l}(l)(1-\pi^{0}_{\tau_l}(l)\gamma_{\tau_l})   )     |  \F_{{\tau_l} \wedge s} \big]
\end{align}
the last inequality following from the fact that in the last esssup, the optimal control is taken from the date $s\leq t$.  Thus 
\begin{align*}
\esp\bigg[  \mathcal X_t(\nu)+\indic_{t \geq \tau_l} K_{\tau_l} U( X^{\nu,0}_{\tau_l}(l)(1-\nu_{\tau_l}(l)\gamma_{\tau_l})  )               |  \F_{{\tau_l} \wedge s} \bigg]   \leq     \mathcal X_s(\mu)
+ \indic_{s \geq \tau_l} K_{\tau_l} U( X^{\nu,0}_{\tau_l}(l)(1-\nu_{\tau_l}(l)\gamma_{\tau_l}) )
\end{align*}
and $(\xi_t^\nu)_{0 \leq  t \leq T}$
is an $\mathbb{F}^0$-supermartingale. It is an $\mathbb{F}^0$-martingale
if and only if the inequality (\ref{surmgle}) is an equality for all $t \in [0,T]$, meaning that $\nu$ is the optimal control on $[0,t ]$,  for all $t \leq T$. This characterizes the optimal strategy.
\finproof 

Remark that the $\mathbb{F}^0$-adapted process
\begin{align}\label{defYdynpro}
Y_t &:=  \frac{\mathcal X_t(\nu)}{U(X_t^{\nu,0}(l))}  \\
& =   \esssup_{\pi^{0} \in \cA_l^{0,\delta}(t,\nu)} \esp  \Big[ \indic_{T<\tau_l} p_T(l) \Big(\frac{X_T^{0}(l)}{X_t^{\nu,0}(l)}\Big)^p 
+   \indic_{ t< \tau_l \leq T}   K_{\tau_l}   \Big(\frac{X_{\tau_l}^{0}(l)}{X_t^{\nu,0}(l)}\Big)^p 
(1-\pi^{0}_{\tau_l}(l)\gamma_{\tau_l})  
 |  \F_{{\tau_l} \wedge t} \big]  \; \; \; \; \; 0 \leq t\leq T \nonumber
\end{align}
does not depend on $\nu$ $\in$ $\cA^{0,\delta}_l$, and is constant after $\tau_l$. We will give a characterization of the process $Y$ in terms of a backward stochastic differential equation (BSDE)
and of the optimal strategy. Before this, we give a characterization of  $\mathbb{F}^0$-martingale.

\begin{Lem}\label{Ftaumgle} 
Let $(M_t)_{t \in [ 0,T]}$ be an $\mathbb{F}^0$-martingale. Then there exists an $\mathbb{F}$-predictable process $\phi$  in $L_{loc}^2(W)$ such that $M_t=M_0+ \int_0^t \phi_s  \indic_{ s \leq \tau_l } dW_s, \, t\in [ 0,T]$.
\end{Lem}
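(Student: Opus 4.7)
The plan is to lift $M$ to a true $\mathbb{F}$-martingale over $[0,T]$, apply the Brownian martingale representation theorem in $\mathbb{F}$ (already invoked in the proof of Theorem~\ref{them 4.4}), and then recover $M$ by optional stopping at $\tau_l$.

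First, I would observe that $\mathbb{F}^0_t=\mathcal F_{\tau_l\wedge t}\subseteq\mathcal F_t$ for every $t\in[0,T]$, so any $\mathbb{F}^0$-adapted process is automatically $\mathbb{F}$-adapted, and in particular the terminal value $M_T$ is $\mathcal F_T$-measurable and integrable (as $M$ is an $\mathbb{F}^0$-martingale on the finite horizon $[0,T]$). Define then the $\mathbb{F}$-martingale
\[
\widetilde M_t := \esp[M_T\,|\,\mathcal F_t],\qquad t\in[0,T].
\]
Since $\mathbb{F}$ has the predictable representation property with respect to $W$, there exists an $\mathbb{F}$-predictable process $\phi\in L^2_{loc}(W)$ such that
\[
\widetilde M_t = \widetilde M_0 + \int_0^t \phi_s\,dW_s,\qquad t\in[0,T].
\]

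Next, I would identify $M$ with the stopped process $\widetilde M^{\tau_l}$. Indeed, $\tau_l$ is an $\mathbb{F}$-stopping time, so by the optional stopping theorem applied to the $\mathbb{F}$-martingale $\widetilde M$ at the bounded stopping time $\tau_l\wedge t$,
\[
\widetilde M_{\tau_l\wedge t}=\esp[M_T\,|\,\mathcal F_{\tau_l\wedge t}]=\esp[M_T\,|\,\mathbb{F}^0_t];
\]
on the other hand the $\mathbb{F}^0$-martingale property of $M$ yields $M_t=\esp[M_T\,|\,\mathbb{F}^0_t]$, so $M_t=\widetilde M_{\tau_l\wedge t}$ for every $t\in[0,T]$. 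Applying the stopping rule to the Brownian stochastic integral then gives
\[
M_t=\widetilde M_0+\int_0^{\tau_l\wedge t}\phi_s\,dW_s=M_0+\int_0^t \phi_s\,\indic_{s\leq\tau_l}\,dW_s,
\]
since $\widetilde M_0=M_0$ (because $\tau_l\wedge 0=0$). The integrand $\phi\indic_{\lbr 0,\tau_l\rbr}$ is in $L^2_{loc}(W)$ because $|\phi\indic_{s\leq\tau_l}|\leq|\phi|$.

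I do not expect a serious obstacle: the only delicate points are verifying that the stopping and conditioning operations commute correctly (handled by optional stopping together with the identification $\mathbb{F}^0_t=\mathcal F_{\tau_l\wedge t}$), and, if one wanted to allow $M$ to be merely a local martingale, replacing the above by a standard localization argument along an increasing sequence of $\mathbb{F}$-stopping times reducing $\widetilde M$.
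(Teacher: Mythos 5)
Your argument is correct and follows essentially the same route as the paper: both reduce the statement to the predictable representation property of the Brownian filtration $\mathbb{F}$ and then account for the stopping at $\tau_l$. The only cosmetic difference is that you lift $M_T$ to the $\mathbb{F}$-martingale $\widetilde M_t=\esp[M_T\,|\,\F_t]$ and recover $M$ by optional sampling at $\tau_l\wedge t$, whereas the paper shows directly that $M$ itself is an $\mathbb{F}$-martingale via the identity $\esp[\,\cdot\,|\,\F_{\tau_l\wedge s}]=\esp[\esp[\,\cdot\,|\,\F_{\tau_l}]\,|\,\F_s]$ and then uses that $M$ is stopped at $\tau_l$; the two observations are equivalent.
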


\proof
We first prove that $(M_t)_{t \in [ 0,T]}$ is also an $\mathbb{F}$-martingale. Indeed, for $s\leq t \leq T$
$$M_s= \esp(M_t |\F_{{\tau_l} \wedge s} ) =\esp( \esp(M_t |\F_{{\tau_l}}) | \F_s  )= \esp( M_t | \F_s  ) $$
because $M_t$ is $\F_{\tau_l}$-measurable. Thus, by representation theorem for the $\mathbb{F}$-martingale, and since $(M_t)_{t\in [ 0,T]}$ is stopped at time $\tau_l$, there exists $\phi$ an $\mathbb{F}$-predictable process such that $M_t=M_0+ \int_0^t \phi_s  \indic_{ s \leq \tau_l } dW_s.$
\finproof

\noindent We are now ready to characterize the optimal strategy. Remark that $Y_t=  \frac{\mathcal X_t(\mu)}{U(X_t^{\nu,0}(l))} $ is positive on $\lbr 0,\tau_l \lbr$ (and zero after $\tau_l$)
 thus   $ Y \in L_{l}^+(\mathbb{F}^0)$   where
$$L_{l}^+(\mathbb{F}^0):= \{\tilde Y:   \, \,  \mathbb{F}^0{\mbox{-adapted processes s.t. }}  \tilde Y_t >0  \mbox{ for   } t \in \lbr 0,\tau_l \lbr  {\mbox{ and }}  \tilde Y_t =0  \mbox{ for   } t \in \lbr\tau_l,\infty\lbr    \,\} .$$

\begin{Thm}\label{thmbefore}
 The process $Y$ defined in (\ref{defYdynpro}) is  the smallest solution in  $L_{l}^+(\mathbb{F}^0)$ to the BSDE
\begin{equation}\label{BSDEY}
Y_t = \indic_{T<\tau_l} p_T(l)+  \indic_{ t < \tau_l \leq T  } K_{\tau_l}  \frac{ (1+\delta \gamma_{\tau_l})^p}{p}   +  \int_t^{T \wedge \tau_l}  f(\theta,Y_\theta,\phi_\theta  )  d \theta 
-  \int_t^{T \wedge \tau_l}  \phi_\theta  dW_\theta, \;\;\; t \in \lbr 0, T  \wedge \tau_l \rbr,
\end{equation}
for some  $\phi$ $\in$  $L_{loc}^2(W)$,  and where
\begin{equation} \label{driverf}
f(s,Y_s,\phi_s) = p \; \esssup_{\nu \in \cA^{0,\delta}_l, s.t. \,\,  \nu_{\tau_l}=-\delta } \Big[\big(\mu_s^{0} Y_s + \sigma_s^{0} \phi_s ) \nu_s 
- \frac{1-p}{2} Y_s |\nu_s\sigma_s^{0}|^2     \Big]
\end{equation}

\end{Thm}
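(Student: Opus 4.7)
The plan is to derive the BSDE \eqref{BSDEY} from the supermartingale property of Lemma \ref{dynprosurmar} via It\^o's formula, and then establish minimality by a verification argument built on the essential supremum form of the driver $f$. To begin with, I would obtain a semimartingale decomposition $dY_t=\alpha_t\,dt+\phi_t\,dW_t$ on the stochastic interval $\lbr 0,T\wedge\tau_l\lbr$ as follows: picking an optimizer $\widehat\pi^0$ of \eqref{globalCRRA} (whose existence is ensured by Proposition \ref{cvstrategie}), Lemma \ref{dynprosurmar} yields that $\xi^{\widehat\pi^0}$ is an $\mathbb F^0$-martingale, and Lemma \ref{Ftaumgle} provides its stochastic integral representation $\xi^{\widehat\pi^0}_t=\xi^{\widehat\pi^0}_0+\int_0^t\psi_s\indic_{s\leq\tau_l}\,dW_s$. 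Since $\mathcal X_t(\widehat\pi^0)=U(X_t^{\widehat\pi^0,0}(l))\,Y_t$ on $\lbr 0,\tau_l\lbr$ and $X^{\widehat\pi^0,0}$ satisfies the SDE \eqref{X0}, inverting It\^o's formula extracts the claimed decomposition of $Y$, with $\phi$ being the process appearing in \eqref{BSDEY}.

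To identify the drift $\alpha$, I would apply It\^o's formula to $U(X_t^{\nu,0}(l))\,Y_t$ for an arbitrary $\nu\in\mathcal A_l^{0,\delta}$, using $U(x)=x^p/p$ together with \eqref{X0}. The drift simplifies to
\[U(X_t^{\nu,0}(l))\Big[\alpha_t+p(\mu_t^0 Y_t+\sigma_t^0 \phi_t)\nu_t-\tfrac{p(1-p)}{2}Y_t(\sigma_t^0\nu_t)^2\Big].\]
Since $U(X_t^{\nu,0}(l))>0$ on $\lbr 0,\tau_l\lbr$ and $\mathcal X_t(\nu)=U(X_t^{\nu,0}(l))\,Y_t$ on that interval, Lemma \ref{dynprosurmar} forces this bracket to be nonpositive with equality at $\nu=\widehat\pi^0$. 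Taking the essential supremum over $\nu$ and invoking Proposition \ref{cvstrategie} to justify restricting to strategies with $\nu_{\tau_l}=-\delta$ then gives $\alpha_t=-f(t,Y_t,\phi_t)$. The terminal data at $T\wedge\tau_l$ drop out of \eqref{defYdynpro} directly: on $\{T<\tau_l\}$ one reads $Y_T=p_T(l)$, while on $\{\tau_l\leq T\}$ the left limit $Y_{\tau_l^-}$ coincides with the value attained along the $\delta$-maximizing sequence of Proposition \ref{cvstrategie}, matching the indicator term in \eqref{BSDEY}.

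For the minimality claim, I would take any $\tilde Y\in L^+_l(\mathbb F^0)$ with associated $\tilde\phi$ solving \eqref{BSDEY}, and apply the same It\^o calculation to $U(X_t^{\nu,0}(l))\,\tilde Y_t$ for an arbitrary $\nu\in\mathcal A_l^{0,\delta}$. Using the BSDE for $\tilde Y$, the drift becomes
\[U(X_t^{\nu,0}(l))\Big[p(\mu_t^0\tilde Y_t+\sigma_t^0\tilde\phi_t)\nu_t-\tfrac{p(1-p)}{2}\tilde Y_t(\sigma_t^0\nu_t)^2-f(t,\tilde Y_t,\tilde\phi_t)\Big]\leq 0\]
by the very definition of $f$ as an essential supremum. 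Hence $U(X^{\nu,0})\tilde Y$ is a local supermartingale on $\lbr 0,T\wedge\tau_l\rbr$, and a standard localization combined with the matching terminal data at $T\wedge\tau_l$ yields $U(X_0^{\nu,0}(l))\,\tilde Y_0\geq \mathcal X_0(\nu)$ for every admissible $\nu$, so $\tilde Y_0\geq Y_0$; repeating the argument from an arbitrary time $t$ gives $\tilde Y\geq Y$ pointwise.

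I expect the main obstacle to be the behavior at the boundary $\tau_l$: reconciling the prescribed jump $Y_{\tau_l^-}\to Y_{\tau_l}=0$ built into $L_l^+(\mathbb F^0)$, the constraint $\nu_{\tau_l}=-\delta$ baked into the driver $f$, and the approximating sequences of Proposition \ref{cvstrategie} requires careful bookkeeping, and in particular the passage from the supermartingale inequality on $[0,T\wedge\tau_l)$ to the terminal bound used in the verification step is not purely formal. A secondary subtlety is that $f$ depends quadratically on $\nu$ and hence, after maximization, on $\phi$, so standard BSDE uniqueness does not apply; this is precisely why the theorem asserts only the \emph{smallest} solution, and why the comparison above must be driven specifically by the supremum structure of $f$.
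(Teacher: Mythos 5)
Your overall strategy coincides with the paper's: use the supermartingale/martingale characterization of Lemma \ref{dynprosurmar}, apply It\^o's formula to $U(X_t^{\nu,0}(l))Y_t$ for arbitrary $\nu$ to force the drift to equal $-f(t,Y_t,\phi_t)$, invoke Proposition \ref{cvstrategie} to justify the value $-\delta$ at $\tau_l$ in the driver and the terminal term $K_{\tau_l}(1+\delta\gamma_{\tau_l})^p/p$, and run a verification argument for minimality. Your minimality step is essentially identical to the paper's: the paper also takes an arbitrary solution $\tilde Y\in L_l^+(\mathbb F^0)$, shows $\xi^\nu(\tilde Y)$ is a supermartingale (using Fatou's lemma, which is the rigorous form of your ``standard localization'' since the process is nonnegative), and divides by $U(X_t^{\nu,0})>0$ before taking the essential supremum over $\nu$.

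The one step that does not hold as written is your first one. You obtain the semimartingale decomposition $dY_t=\alpha_t\,dt+\phi_t\,dW_t$ by taking the martingale representation of $\xi^{\hat\pi^0}$ for an optimizer $\hat\pi^0$ of \eqref{globalCRRA}, and you attribute the existence of this optimizer to Proposition \ref{cvstrategie}. That proposition does not assert existence of an optimizer; it only produces, for each admissible $\pi^0$, an improving sequence whose members take the value $-\delta$ at $\tau_l$. Indeed, the paper shows (see the discussion around \eqref{Equ:pinp} and Proposition \ref{cvversoptimal}) that the natural candidate for the optimal strategy is \emph{not} predictable, so the supremum need not be attained in $\cA_l^{0,\delta}$, and your starting point collapses. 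The paper secures the decomposition differently and more robustly: taking $\nu=0$ in Lemma \ref{dynprosurmar} shows that $Y$ itself is an $\mathbb F^0$-supermartingale, so Doob--Meyer together with Lemma \ref{Ftaumgle} gives $dY_t=\phi_t\,dW_t-dA_t$ with $A$ increasing and predictable --- no optimizer needed, and no a priori assumption that the finite-variation part is absolutely continuous (that comes out only when $dA$ is identified with the esssup expression). You should restructure the first step along these lines; the subsequent identification of $dA$ via the martingale property of $\xi^{\hat\pi^0}$ is admittedly also stated loosely in the paper and would more properly be carried out with the approximating ($\varepsilon$-optimal) strategies, but the decomposition itself must come from Doob--Meyer, not from an assumed optimizer.
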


\begin{Rem}
As in Theorem 4.2 in \cite{jp}, the optimal strategy before default is characterized through the optimization of the driver of a BSDE. However, the main difference relies in the fact that in our case, the driver has a jump at the default time $\tau_l$. Nevertheless, since the jump occurs (if it occurs) only at the terminal date of the BSDE, standard theory on BSDE still apply.
\end{Rem}

\proof By Lemma \ref{dynprosurmar}, for any $\nu$ $\in$ $\cA^{0,\delta}_l$
$$\xi_t^\nu= U(X_t^{\nu,0}(l))  Y_t+\indic_{t \geq \tau_l} K_{\tau_l} U( X^{\nu,0}_{\tau_l}(l)(1-\nu_{\tau_l}(l)\gamma_{\tau_l}) \, \, \,   0 \leq t \leq T  $$ is an  $\mathbb{F}^0$-supermartingale. In particular, by taking $\nu$ $=$ $0$, we see that the processes $(Y_t+  K_{\tau_l} \indic_{ t \geq \tau_l })_{0\leq t\leq T  }$, and thus $(Y_t)_{0\leq t\leq T  }$  are   $\mathbb{F}^0$-supermartingales.
By the Doob-Meyer decomposition and Lemma \ref{Ftaumgle}, there exists   $\phi$ $\in$ $L_{loc}^2(W)$, and a finite variation increasing $ \mathbb{F}^0$-predictable process $A$ such that:
\begin{equation}\label{dynY}
 dY_t  = \phi_t   dW_t - dA_t, \;\;\;  t\in \lbr 0,  T \wedge \tau_l \rbr .
\end{equation}
From It\^o's formula, 
we deduce that the finite variation process in the decomposition of the  $\mathbb{F}^0$-supermartingale $\xi^\nu$,
$\nu$ $\in$ $\cA^{0,\delta}_l$, is given by $-A^\nu$ with
\begin{equation*}
dA_t^\nu =(X_t^{\nu,0}(l))^p \Big\{  \frac{1}{p} dA_t -  (\mu_t^{0}Y_t + \sigma_t^{0}\phi_t \indic_{t \leq \tau_l} ) \nu_t dt - \frac{1-p}{2} Y_t |\nu_t\sigma_t^{0}|^2 dt
- K_t \frac{(1-\nu_t\gamma_t)^p}{p}   d\indic_{ t \geq \tau_l }  \Big\}.
\end{equation*}
$A^\nu$  is nondecreasing and the martingale property of $\xi^{\hat\pi^{0}}$ implies that 
\begin{align*}
  dA_t &=  p \Big[  (\mu_t^{0}Y_t + \sigma_t^{0}\phi_t \indic_{t \leq \tau_l} ) ) \hat\pi_t^{0}  dt - \frac{1-p}{2} Y_t |\hat\pi_t^{0}\sigma_t^{0}|^2 dt 
+ K_t \frac{(1-\hat\pi_t^{0}\gamma_t)^p}{p}  d\indic_{ t \geq \tau_l }    \Big]  \end{align*}
and $$A_t=A_0+  p \; \esssup_{\nu \in \cA^{0,\delta}_l} \Big[ \int_0^t\big((\mu_s^{0} Y_s + \sigma_s^{0} \phi_s ) \nu_s 
- \frac{1-p}{2} Y_s |\nu_s\sigma_s^{0}|^2\big)     ds  +  \indic_{ t \geq \tau_l } K_{\tau_l}  \frac{ (1-\nu_{\tau_l} \gamma_{\tau_l})^p}{p} \Big] .$$
Maximizing at  $\tau_l$ leads to $\nu_{\tau_l}=-\delta$ (see Proposition \ref{cvstrategie}) and 
$$A_t=A_0+  p \; \esssup_{\nu \in \cA^{0,\delta}_l s.t. \,\,  \nu_{\tau_l}=-\delta } \Big[ \int_0^t\big(\mu_s^{0} Y_s + \sigma_s^{0} \phi_s ) \nu_s 
- \frac{1-p}{2} Y_s |\nu_s\sigma_s^{0}|^2\big)     ds \Big] +  \indic_{ t \geq \tau_l } K_{\tau_l}  \frac{ (1+\delta  \gamma_{\tau_l})^p}{p}  .$$
Furthermore,  $Y_T= \indic_{T<\tau_l} p_T(l)$ and $(Y_t)_{0\leq t\leq T  }$ is constant after $\tau_l$ , thus $(Y,\phi)$ solves the BSDE (\ref{BSDEY}). Note that $Y$ is not a continuous process, it may jump at time $\tau_l$.

We now prove that $Y$ is  smallest solution in  the $L_{l}^+(\mathbb{F}^0)$ to the BSDE
(\ref{BSDEY}). Let   $\tilde Y \in L_l^+(\mathbb{F}^0)$ be another solution, and  we define the family of
nonnegative $\mathbb{F}^0$-adapted processes $\tilde\xi^\nu(\tilde Y)$, $\nu \in \cA^{0,\delta}_l$,
 as 
$$\xi_t^\nu(\tilde Y)= U(X_t^{\nu,0}(l)) \tilde Y_t+\indic_{t \geq \tau_l} K_{\tau_l} U( X^{\nu,0}_{\tau_l}(l)(1-\nu_{\tau_l}(l)\gamma_{\tau_l})), \, \,  \, \,  \, \, 
t\in [0,T] .$$
By similar calculations as above,  $d\xi_t^\nu(\tilde Y)=d\tilde M_t^\nu - d\tilde A_t^\nu$, where
 $\tilde A^\nu$ is a nondecreasing $\mathbb{F}^0$-adapted process, and $\tilde M^\nu$ is a $\mathbb{F}^0$- local martingale.  By Fatou's lemma,   this implies that the  process $\xi^\nu(\tilde Y)$ is a 
 $\mathbb{F}^0$-supermartingale, for any $\nu$ $\in$ $\cA^{0,\delta}_l $.  Since  $\tilde Y_T = \indic_{T<\tau_l} p_T(l)$, we deduce that for all $\nu$ $\in$ $\cA^{0,\delta}_l$
 $$\esp \Big[ U(X_T^{\nu,0}) \indic_{T<\tau_l} p_T(l) + 
\indic_{t \geq \tau_l} K_{\tau_l} U( X^{\nu,0}_{\tau_l}(l)(1-\nu_{\tau_l}(l)\gamma_{\tau_l}) 
\Big| \F^0_t \Big]  \leq
 U(X_t^{\nu,0}) \tilde Y_t,  \, \,  \, \,  \, \, 
t\in [0,T]. $$
 Since
 $p$ $>$ $0$, $U(X_t^{\nu,0})$ is positive.  By dividing the above inequalities by  $U(X_t^{\nu,0})$,  we deduce 
by definition of $Y$ (see (\ref{defYdynpro})), and arbitrariness of $\nu$ $\in$ $\cA^{0,\delta}_l$,  that   $Y_t$ $\leq$  $\tilde Y_t$, $0\leq t\leq T$. 
This shows that $Y$ is the smallest  solution to the BSDE (\ref{BSDEY}). 
\finproof

For optimizing \eqref{defYdynpro} via the BSDE (\ref{BSDEY}), a naive approach will consist in optimizing $\pi^0$ at time $\tau_l$, leading to an  $\pi^{0}_{\tau_l}=-\delta$, and then optimizing for $s < \tau_l$ the driver \[ f^0(s,Y^0_s,\phi^0_s) =\esssup_{\nu_s\geq-\delta}p[ \big(\mu_s^{0} Y^0_s + \sigma_s^{0} \phi^0_s ) \nu_s 
- \frac{1-p}{2} Y^0_s |\nu_s\sigma_s^{0}|^2]\] where    $Y^0$ is solution to the BSDE
$$Y^0_t = \indic_{T<\tau_l} p_T(l)  +  \indic_{ t < \tau_l \leq T  } K_{\tau_l}  \frac{ (1+\delta \gamma_{\tau_l})^p}{p} +\int_t^{T \wedge \tau_l}  f^0(\theta,Y_\theta^0,\phi_\theta^0  ) d\theta 
-  \int_t^{T \wedge \tau_l}  \phi^0_\theta  dW_\theta, \;\;\;  t\in \lbr 0,  T \wedge \tau_l \rbr ,$$
leading to the optimal portfolio $\hat \pi_s^0$.
Thus, the natural candidate to be the optimal strategy before default is   \begin{equation}\label{Equ:pinp}\pi^{\mathrm{np}}:= \indic_{\lbr0, \tau_l \lbr } \hat \pi^0 - \delta   \indic_{\tau_l } ,\end{equation} but unfortunately $\pi^{\mathrm{np}}$  is not a predictable process. Nevertheless, we will prove the existence of a sequence of predictable strategies in $\cA_l^{0,\delta}$ such that the corresponding value function tends to the value function relative to this non predictable strategy.
To do this, for any strategy $\pi^{0}\in\cA_l^{0,\delta}$, we recall the corresponding value function  of the before default global optimization problem
\begin{equation} \label{vbd}
\mathcal X_0(\pi^{0})=\esp\bigg[\indic_{T<\tau_l} p_T(l) U( X_T^{\pi^{0}}(l)) + \indic_{T\geq \tau_l}   K_{\tau_l} U( X^{\pi^{0}}_{\tau_l}(l)(1-\pi^{0}_{\tau_l}(l)\gamma_{\tau_l}) ) \bigg].
\end{equation}
Note that (\ref{vbd}) can also be defined for a strategy $\pi$  that is predictable only on $\lbr 0, \tau_l \lbr $ (and not necessarly on $\lbr 0, \tau_l \rbr$).
Using Proposition \ref{cvstrategie}, we have the following result:

\begin{Pro}\label{cvversoptimal}
The  strategies $ ( \pi^0_n =\indic_{\lbr 0, \tau_n\rbr} \hat \pi^0 - \indic_{\rbr \tau_n, \tau_l\rbr} \delta )$         
where $\hat \pi^0$ is the optimal process for the driver of the BSDE
$$Y^0_t = \indic_{T<\tau_l} p_T(l) +   \indic_{ t < \tau_l \leq T  } K_{\tau_l}  \frac{ (1+\delta \gamma_{\tau_l})^p}{p} + \int_t^{T \wedge \tau_l}  f^0(\theta,Y_\theta^0,\phi_\theta^0  ) d\theta 
-  \int_t^{T \wedge \tau_l}  \phi^0_\theta  dW_\theta, \;\;\; t \in \lbr 0, T  \wedge \tau_l \lbr,$$
$$ f^0(s,y,\phi) =p \, \sup_{\nu\geq-\delta} \Big[ \big(\mu_s^{0} y + \sigma_s^{0} \phi ) \nu - \frac{1-p}{2} y |\nu\sigma_s^{0}|^2\Big] $$
are in $\cA^{0,\delta}_l $  and  satisfy \, \, \, \, \, \,
$$\lim_{n \rightarrow + \infty}  \mathcal X_0(\pi^{0}_n ) =V(l)=  \esp\bigg[\indic_{T<\tau_l} p_T(l) U( X_T^{\hat \pi^{0}}(l)) + \indic_{T\geq \tau_l}   K_{\tau_l} U( X^{\hat \pi^{0}}_{\tau_l}(l)(1+\delta \gamma_{\tau_l}) ) \bigg]  .$$

\end{Pro}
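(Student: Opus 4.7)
The plan is to first check that each $\pi^0_n$ lies in $\cA^{0,\delta}_l$, then rewrite $\mathcal X_0(\pi^0_n)$ using \eqref{vbd} and handle the two events $\{T<\tau_l\}$ and $\{T\geq\tau_l\}$ separately by a pathwise convergence argument, and finally combine Fatou's lemma with the optimality bound $\mathcal X_0(\pi^0_n)\leq V_0^\delta(l)$ to identify the limit with $V(l)$.

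For admissibility, predictability of $\pi^0_n$ follows from the fact that $\tau_n$ is an $\mathbb F$-stopping time and $\hat\pi^0$, being produced by a standard BSDE with $\mathbb F$-predictable driver coefficients, is itself $\mathbb F$-predictable. The bound $\pi^0_n\geq-\delta$ holds since the esssup in $f^0$ is restricted to $\nu\geq-\delta$, so $\hat\pi^0\geq-\delta$, and the constant $-\delta$ trivially satisfies the bound. The $L^2$-integrability $\int_0^{\tau_l\wedge T}|\pi^0_{n,s}\sigma^0_s|^2ds<\infty$ inherits from the same property of $\hat\pi^0$. Finally, $\pi^0_{n,\tau_l}\gamma_{\tau_l}=-\delta\gamma_{\tau_l}<1$ since $\gamma>0$ and $\delta\geq 0$.

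Next I analyze the two terms of $\mathcal X_0(\pi^0_n)$ pathwise. On $\{T<\tau_l\}$, the fact that $\tau_n\uparrow\tau_l>T$ implies that for almost every $\omega$ there exists $N(\omega)$ with $\tau_n(\omega)>T$ for $n\geq N(\omega)$; on such $n$, $\pi^0_n$ coincides with $\hat\pi^0$ on $[0,T]$, so $X_T^{\pi^0_n}(l)=X_T^{\hat\pi^0}(l)$ and the first summand is eventually constant in $n$. On $\{T\geq\tau_l\}$, a direct computation using \eqref{X0} yields
\[\frac{X^{\pi^0_n}_{\tau_l}(l)}{X^{\hat\pi^0}_{\tau_l}(l)}=\exp\biggl(\int_{\tau_n}^{\tau_l}\bigl(-(\hat\pi^0_s+\delta)\mu^0_s+\tfrac12(\sigma^0_s)^2((\hat\pi^0_s)^2-\delta^2)\bigr)ds-\int_{\tau_n}^{\tau_l}(\hat\pi^0_s+\delta)\sigma^0_s\,dW_s\biggr),\]
which converges to $1$ a.s. as $n\to\infty$ by the same reasoning as in the proof of Proposition \ref{cvstrategie}: the Lebesgue integral vanishes since its integrand is $ds$-integrable on $[0,\tau_l]$, and the stochastic integral tends to $0$ almost surely along a subsequence (and in probability along the whole sequence) thanks to the Itô isometry and the square-integrability of $(\hat\pi^0_s+\delta)\sigma^0_s$. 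Hence $X^{\pi^0_n}_{\tau_l}(l)(1+\delta\gamma_{\tau_l})\to X^{\hat\pi^0}_{\tau_l}(l)(1+\delta\gamma_{\tau_l})$ a.s.

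To conclude, I apply Fatou's lemma to the two nonnegative integrands (recall $U(x)=x^p/p\geq 0$, $p_T(l)\geq 0$, $K_{\tau_l}\geq 0$) to get $\liminf_n\mathcal X_0(\pi^0_n)\geq V(l)$. For the reverse inequality, each $\pi^0_n$ belongs to $\cA^{0,\delta}_l$, hence $\mathcal X_0(\pi^0_n)\leq V_0^\delta(l)$ for every $n$, and the BSDE characterization of Theorem \ref{thmbefore} combined with the fact that $-\delta$ is the unique maximizer of the terminal contribution at $\tau_l$ (cf. Proposition \ref{cvstrategie}) identifies $Y=Y^0$ on $\lbr 0,\tau_l\lbr$ with the same terminal jump $K_{\tau_l}(1+\delta\gamma_{\tau_l})^p/p$, so $V_0^\delta(l)=V(l)$. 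The two bounds match and give $\lim_n\mathcal X_0(\pi^0_n)=V(l)$. The main obstacle is the interchange of limit and expectation when $U$ is not bounded: I handle it through Fatou in one direction and the supremum characterization in the other, rather than trying to exhibit an integrable dominating function.
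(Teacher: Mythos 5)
Your argument is correct and follows essentially the same route as the paper: a pathwise analysis of the wealth ratio (eventual equality on $\{T<\tau_l\}$, convergence to $1$ of the exponential correction on $\{T\geq\tau_l\}$) combined with Fatou's lemma for the lower bound, and an optimality bound for the upper bound. The only cosmetic difference is that you route the upper bound through $\mathcal X_0(\pi^0_n)\leq V_0^\delta(l)=V(l)$ via the BSDE of Theorem \ref{thmbefore}, whereas the paper writes $\mathcal X_0(\pi^0_n)\leq\mathcal X_0(\pi^{\mathrm{np}})$ directly; both statements rest on the same (largely implicit) verification argument for $\hat\pi^0$, so the two proofs are at the same level of rigor.
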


\proof
Let $(\tau_n)_{n \in \mathbb{N}^*}$ be an increasing sequence of $\mathbb{F}$-predictable stopping times that converge to $\tau_l$. For any integer $n\geq 1$, the strategy $ \pi^0_n :=\indic_{\lbr 0, \tau_n \rbr} \hat \pi^0 - \indic_{\rbr \tau_n, \tau_l\rbr} \delta $  is in $\cA^{0,\delta}_l $  and $\pi^0_n$ converges to the non-predictable optimal strategy $\pi^{\mathrm{np}}$ defined in \eqref{Equ:pinp} when $n\rightarrow\infty$. Moreover, for any $n\in\mathbb N_*$, $\mathcal X_0( \pi^0_{n}) \leq \mathcal X_0( \pi^{\mathrm{np}})$ and by Proposition \ref{cvstrategie}
$$  \mathcal X_0( \pi^{\mathrm{np}}) \geq  \lim_{n \rightarrow + \infty} \mathcal X_0( \pi^0_{n}) \geq   V (\hat \pi^0 ).$$
But the proof of Proposition \ref{cvstrategie} still holds if we change the value at time $\tau_l$ of the portfolio $\pi^0$, thus the converse inequality $\mathcal X_0( \pi^{\mathrm{np}}) \leq  \lim_{n \rightarrow + \infty} \mathcal X_0( \pi^0_{n}) $ holds and 
\begin{align*}
\esp\bigg[\indic_{T<\tau_l} p_T(l) U( X_T^{\hat \pi^{0}}(l)) + \indic_{T\geq \tau_l}   K_{\tau_l} U( X^{\hat \pi^{0}}_{\tau_l}(l)(1+\delta \gamma_{\tau_l}) ) \bigg]&=  \mathcal X_0( \pi^{\mathrm{np}})= \mathcal X_0( \indic_{\lbr 0, \tau_l \lbr } \hat \pi^0 - \delta   \indic_{\tau_l }) \\
&=  \lim_{n \rightarrow + \infty} \mathcal X_0( \indic_{\lbr 0, \tau_n \rbr} \hat \pi^0 - \indic_{\rbr \tau_n, \tau_l \rbr} \delta) \\
 &=  \lim_{n \rightarrow + \infty} \mathcal X_0(\pi^0_{n})
\end{align*}
\finproof

\section{Numerical illustrations}

We now illustrate our previous results by explicit models and we aim to compare the  optimization results for an insider and a standard investor. We recall that all investors start with an initial wealth $X_0$. For the purpose of comparison, we choose a similar model with the one studied in \cite{jp}. More precisely, we let the parameters $\mu^{0}$, $\sigma^{0}$, $\gamma$  to be constant, and $\mu^1(\theta)$, $\sigma^1(\theta)$ are deterministic functions of $\theta$ given by \begin{eqnarray}
\mu^1(\theta) \; = \; \mu^{0}\frac{\theta}{T}, & &
\sigma^1(\theta) \; = \; \sigma^{0}(2-\frac{\theta}{T}), \;\;\; \theta \in [0,T],
\end{eqnarray}
which means that the ratio of the after-default and before-default for the return rate of the asset is smaller than $1$ and for the volatility is larger than $1$.  Moreover, these ratios increase or decrease linearly 
with the default time respectively:  the after-default rate of return drops to zero, when the default time occurs near the initial date, and 
converges  to the  before-default rate of return, when  the default time occurs near the finite investment horizon.  
For the volatility,
this ratio  converges to the double (resp. initial) value of the before-default  volatility, when the default time goes to the initial (resp. terminal horizon) time. Moreover, in order to satisfy the hypothesis in the simulation part of  \cite{jp}, we have to assume that the default barrier $L$ has no atoms (to ensure the density hypothesis, see Remark \ref{Rklebesgue})  and that  $L$ is independent of the filtration $\bF$ (so that the default density is a deterministic function). In this case,  $p_T(L)=1$. 

Consider the CRRA utility $U(x)=\frac{x^p}{p}$, $0<p<1$, the after-default value function is given from \eqref{V1 CRRA} by
\[V^1_{\tau_l}(x)=K_{\tau_l}U(x)\]
where
\[K_{\tau_l}=\left(\esp[Z_T(\tau_l)^{\frac{p}{p-1}}]\right)^{1-p}=\exp\left(\frac 12\frac{p}{1-p}\Big(\frac{\mu^1(\tau_l)}{\sigma^1(\tau_l)}\Big)^2(\tau_l\vee T-\tau_l)\right)\]
Furthermore, the solution of the before-default optimization problem is given by 
\[V_0(l)=Y_0U(X_0)\]
where $Y$ is the solution of the BSDE \eqref{BSDEY} when letting $\phi=0$, i.e. 
\begin{equation}\label{numerical Y}Y_t=\indic_{T<\tau_l}+\indic_{T\geq\tau_l} K_{\tau_l}\frac{(1+\delta\gamma)^p}{p}+\int_t^{\tau_l\wedge T}f(\theta,Y_\theta)d\theta\end{equation}
where 
\[f(t,y)=p \esssup_{\nu\in\mathcal A_l^{0,\delta}, \nu_{\tau_l}=-\delta}\{\mu^0\nu_t-\frac{1-p}{2}(\nu_t\sigma^0)^2\} y. \]
We notice that in the case where the default time $\tau_l$ occurs after the maturity $T$, the optimal strategy coincides with the classical Merton strategy with constraint $\pi\in[-\delta, \frac{1}{\gamma}[$.
In the case where $\tau_l$ occurs before $T$, the process $Y$ is stopped at $\tau_l$, with the terminal value depending on the quantity $K_{\tau_l}$, and the short-selling strategy $-\delta$ at $\tau_l$. We use an iterative algorithm to solve the equation \eqref{numerical Y}. 

The following results are based on the model parameters described below: $\mu^0=0.03$, $\sigma^0=0.2$, $T=1$, the risk aversion parameter $p=0.8$. For the standard investor, we use the deterministic model as in \cite{jp} letting $\proba(\tau>t)=e^{-\lambda t}$, $\lambda>0$ so that the density function is $\alpha(\theta)=\lambda e^{-\lambda\theta}$ for all $\theta\geq 0$.    

Figure \ref{graphe_merton} compares the optimal value function for insider, investor and Merton strategy. We  fix the short-selling constraint $\delta =0.5$, the loss given default $\gamma=0.2$ and the default intensity $\lambda= 0.3$. This corresponds to a relatively high risk of default. At the default time, the value function suffers a brutal loss for all the three strategies. The insider outperforms the other two strategies before and after the default occurs. Before the default, the value function for the standard investor is smaller than the Merton one because the latter does not consider at all  the potential default risk. However, when the default occurs, the investor outperforms the Merton strategy since the default risk is taken into account from the beginning. 

\begin{figure}[h]\caption{\small{Value function insider vs investor and Merton: $p=0.2$, $\lambda=0.3$ and $\delta =0.5$.}}\label{graphe_merton} 
\begin{center}
\epsfig{file=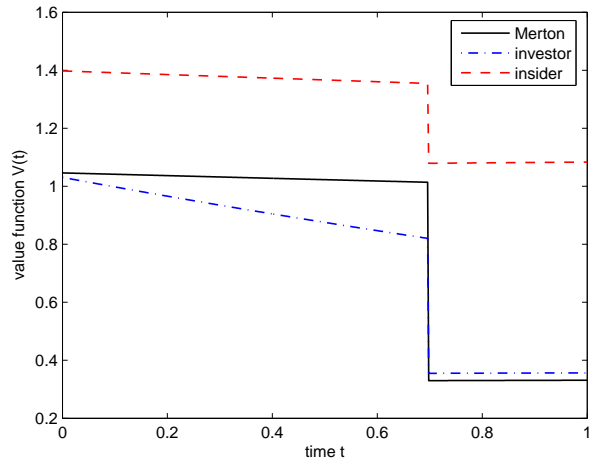,height=9cm,angle=270} 
\end{center}

\end{figure}

In Figure \ref{rolegamma1}, we  fix a smaller  short-selling constraint $\delta=0.1$ and we study the impact of the loss given default $\gamma$. We observe that the value functions for both insider and investor are increasing with respect to the loss value $\gamma$.  It is interesting to emphasize this phenomenon as a consequence of the short-selling where both insider and the investor will profit the default event and obtain a larger value function.  More precisely, in the left-hand figure, we consider a relatively low default risk (with the default intensity $\lambda= 0.1$) and we observe that the gain of the insider with respect to the investor remains stable in time and also for different values of $\gamma$. Whereas in the right-hand figure  with a higher default risk ($\lambda= 0.3$), the impact of $\gamma$ is more important for the investor:  before the default the value function of the investor increases more significantly as $\gamma$ increases, and there is no longer a drop at the default time.    
This can be explained by the fact that the investor has no limit for the short-selling strategy.  

\begin{center}
\begin{figure}[h]\caption{Value function insider vs investor:   $\delta =0.1$ }\label{rolegamma1}
$$\begin{array}{ccc}
\includegraphics[width=6cm,angle=270]{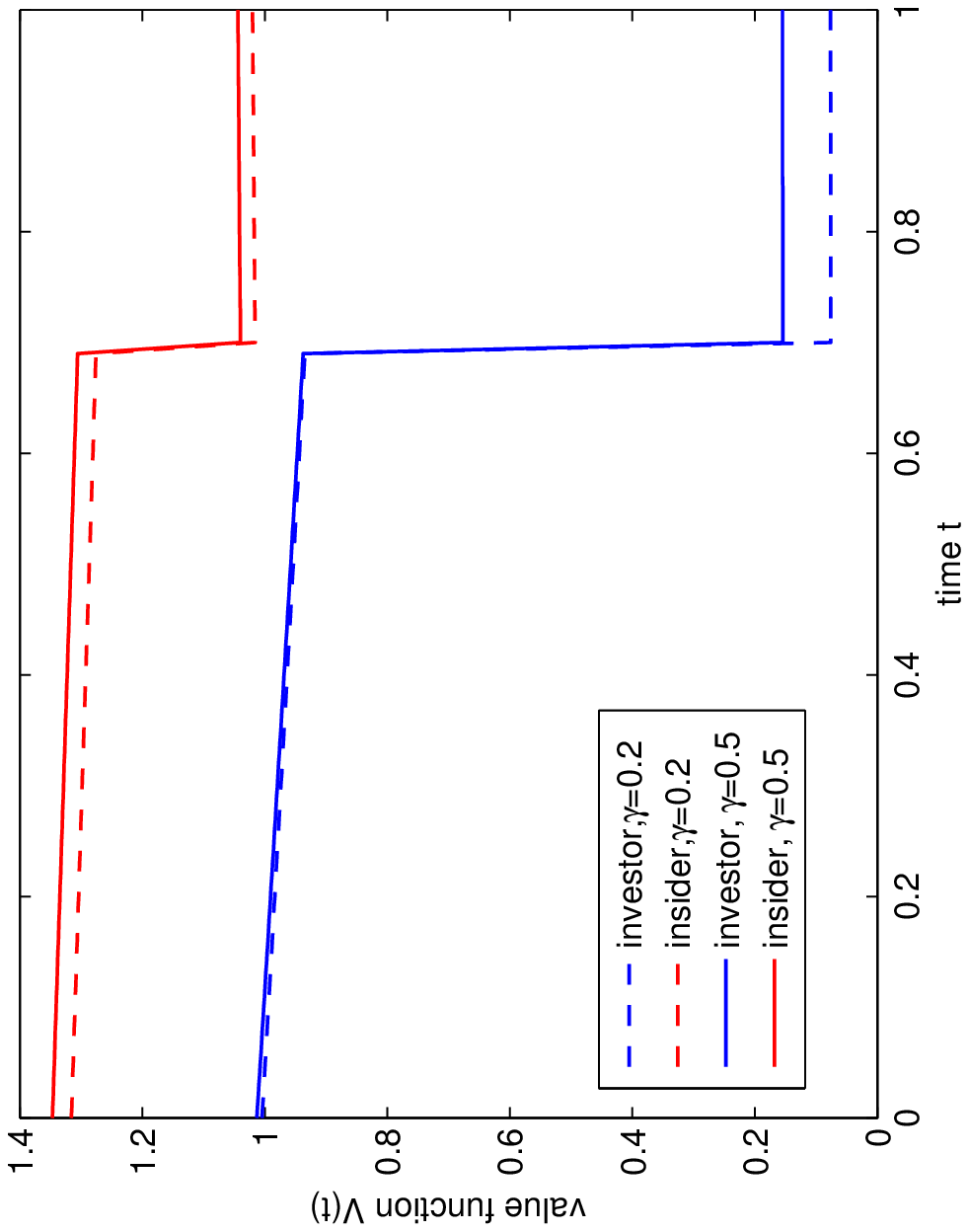} &
\includegraphics[width=6cm,angle=270]{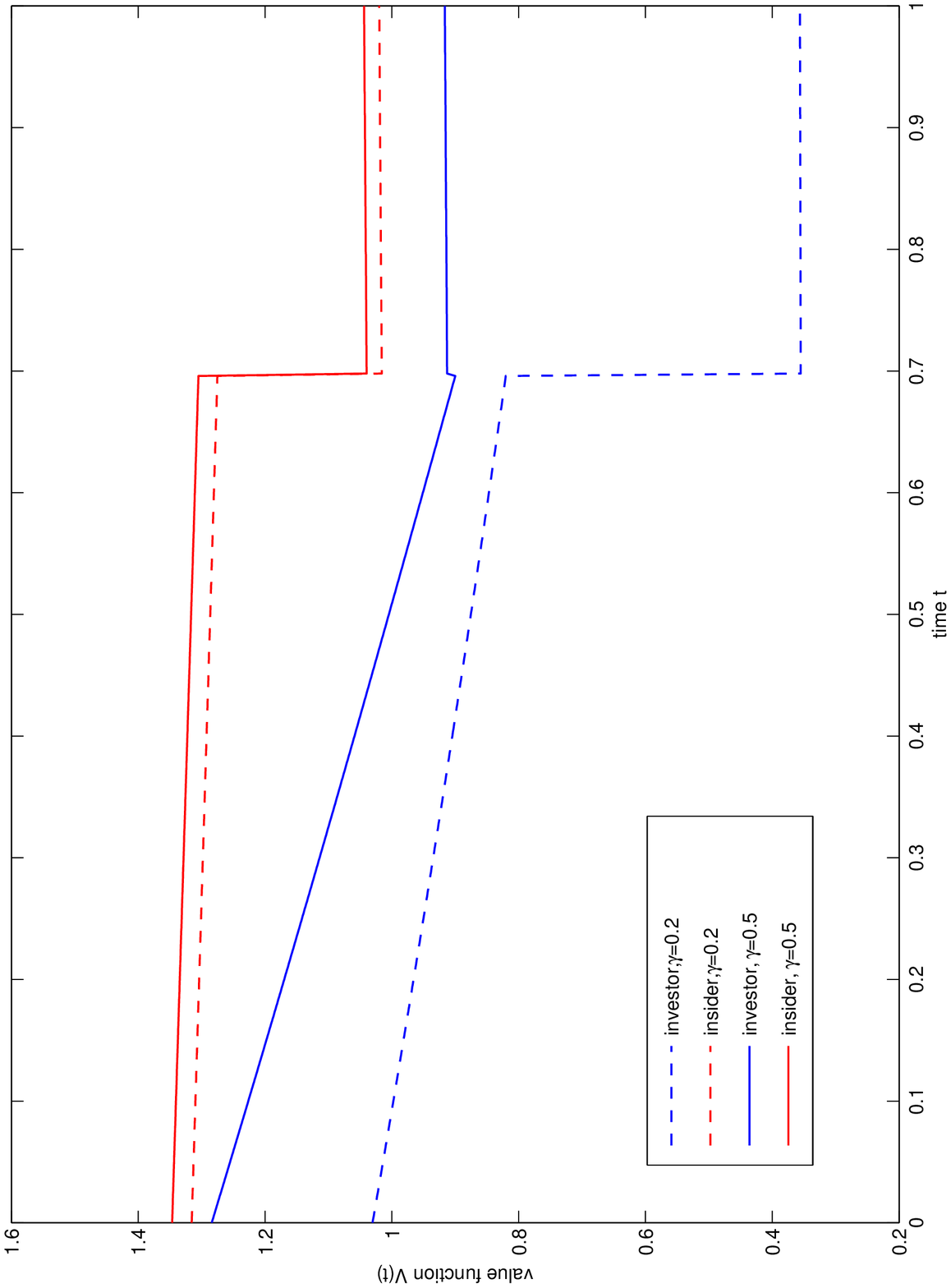} \\
\mbox{$\lambda=0.1$ } & \mbox{$\lambda=0.3$ }
\end{array}$$
\end{figure}
\end{center}

Figure \ref{rolegamma2} shows what may happen in a extreme situation for the default and loss risks. 
For extreme parameters of default intensity $\lambda=0.5 $ and loss value $\gamma=0.5$, we  remark that  the investor may outperform the insider, see the left-hand figure. In this situation, the investor bet on the occurrence of the default before $T$ and short sell a big amount of the risky asset. The insider value function at time 0 is higher for the investor, and then decreases  rapidly until the default time : indeed, as the time goes on and the default has not occurred yet, the risk that the bet turns out to be wrong increases, and thus the value function decreases. At the default time, the investor makes profit of the short selling strategy, the value function being almost doubled. The right-hand figure illustrates  the wealth process of the investor on a given scenario in which the default occurs after $T$ despite    extreme parameters for the default and loss risks ($\lambda=0.5 $ and  $\gamma=0.5$). We observe large losses, which are  induced by the wrong bet on default and extreme short selling positions.

\begin{center}
\begin{figure}[h]\caption{Extreme scenarios of default :   $\lambda=0.5$, $\delta =0.1$ }\label{rolegamma2}
$$\begin{array}{ccc}
\includegraphics[width=6cm,angle=270]{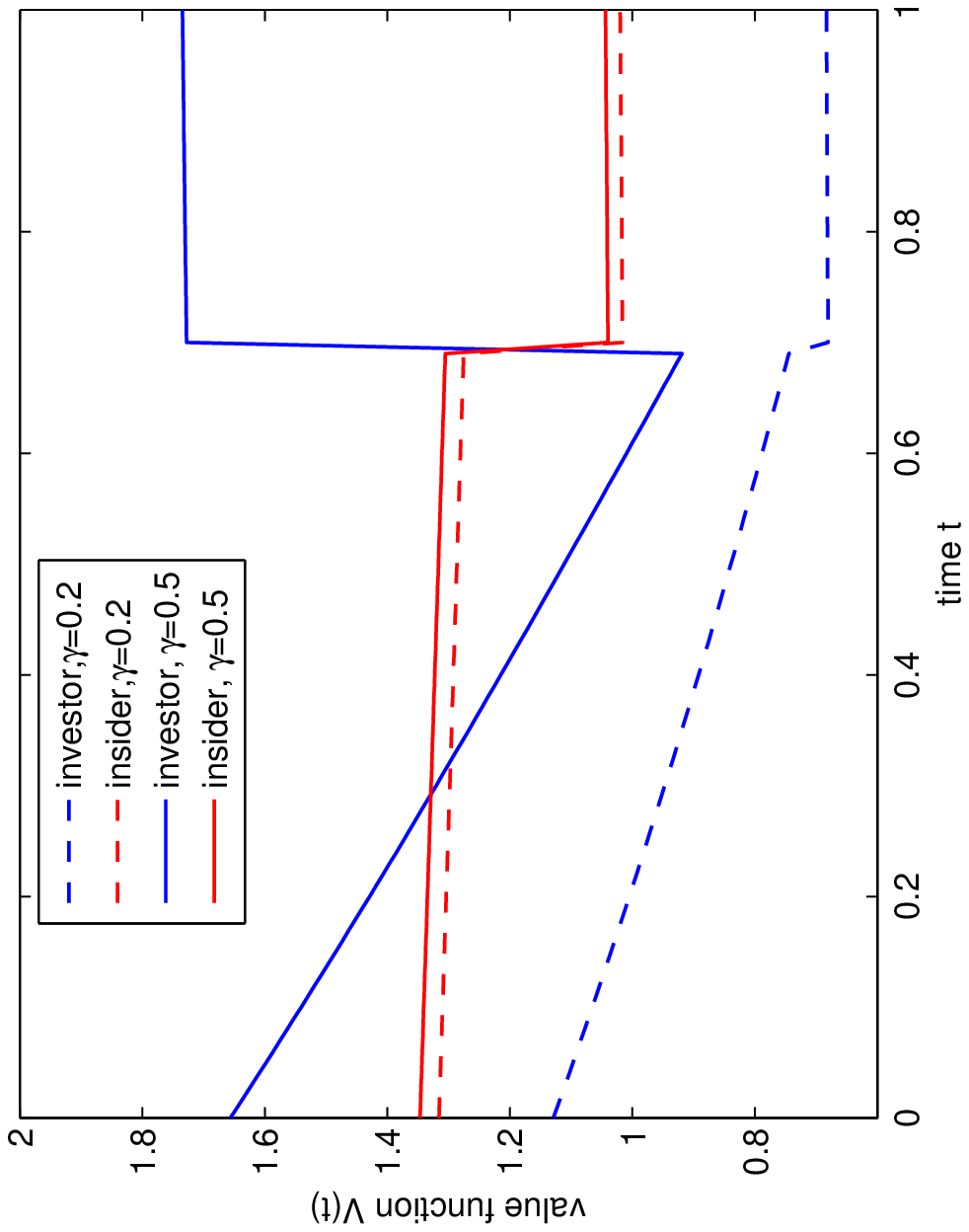} &
\includegraphics[width=6cm,angle=270]{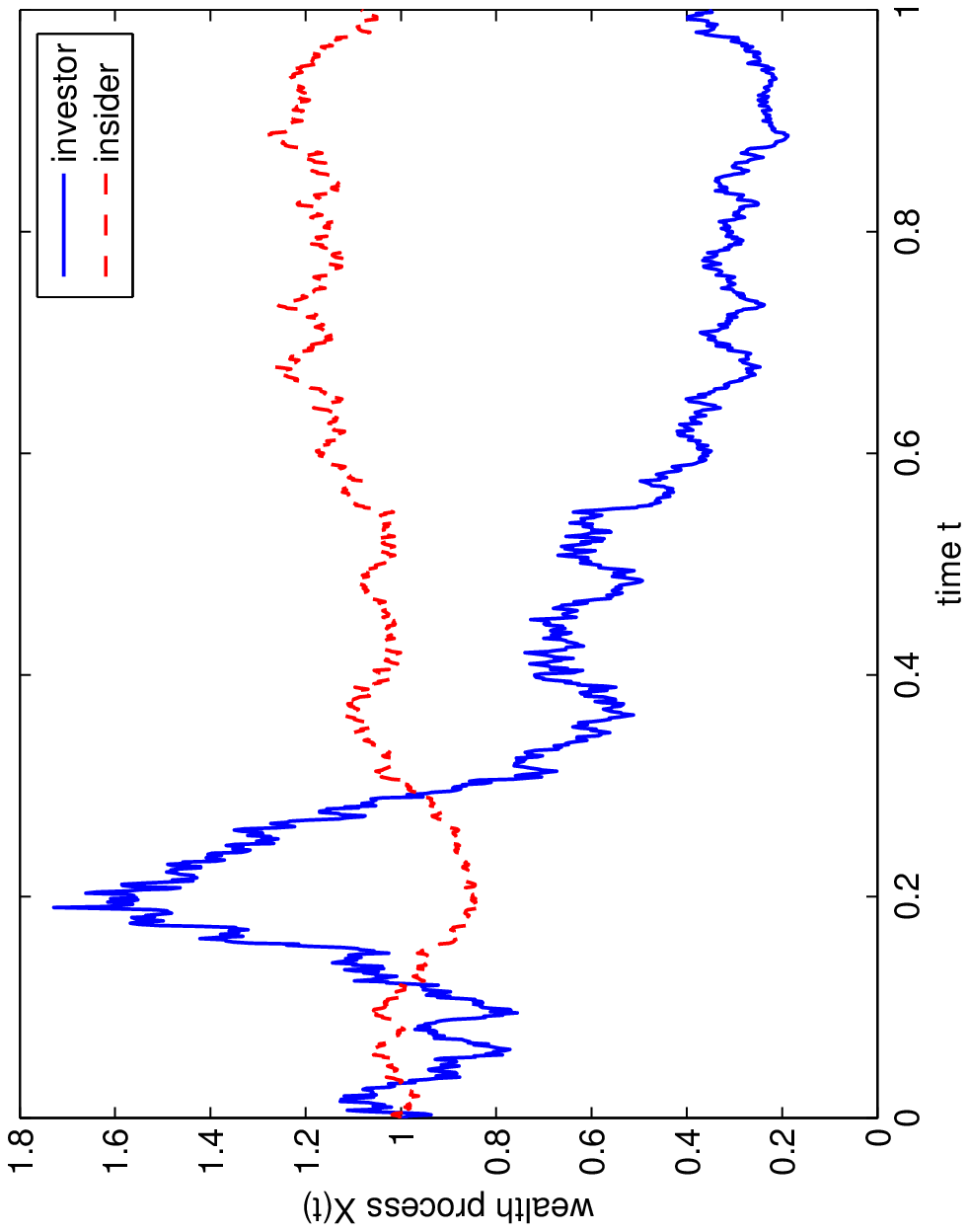} \\
\mbox{\small{Value function insider vs investor} } & \mbox{\small{Wealth process on one trajectory with $\tau>T$, $\gamma=0.5$} }
\end{array}$$
\end{figure}
\end{center}

Finally, we study the role of the short-selling limit for different values of $\delta$ in Figure \ref{role delta 1}. Not surprisingly, the gain of the insider is an increasing function of $\delta$. 
\begin{figure}[h]\caption{\small{The impact of the short-selling constraint,  $\lambda=0.3$ and $\gamma =0.5$.}}\label{role delta 1} 
\begin{center}
\epsfig{file=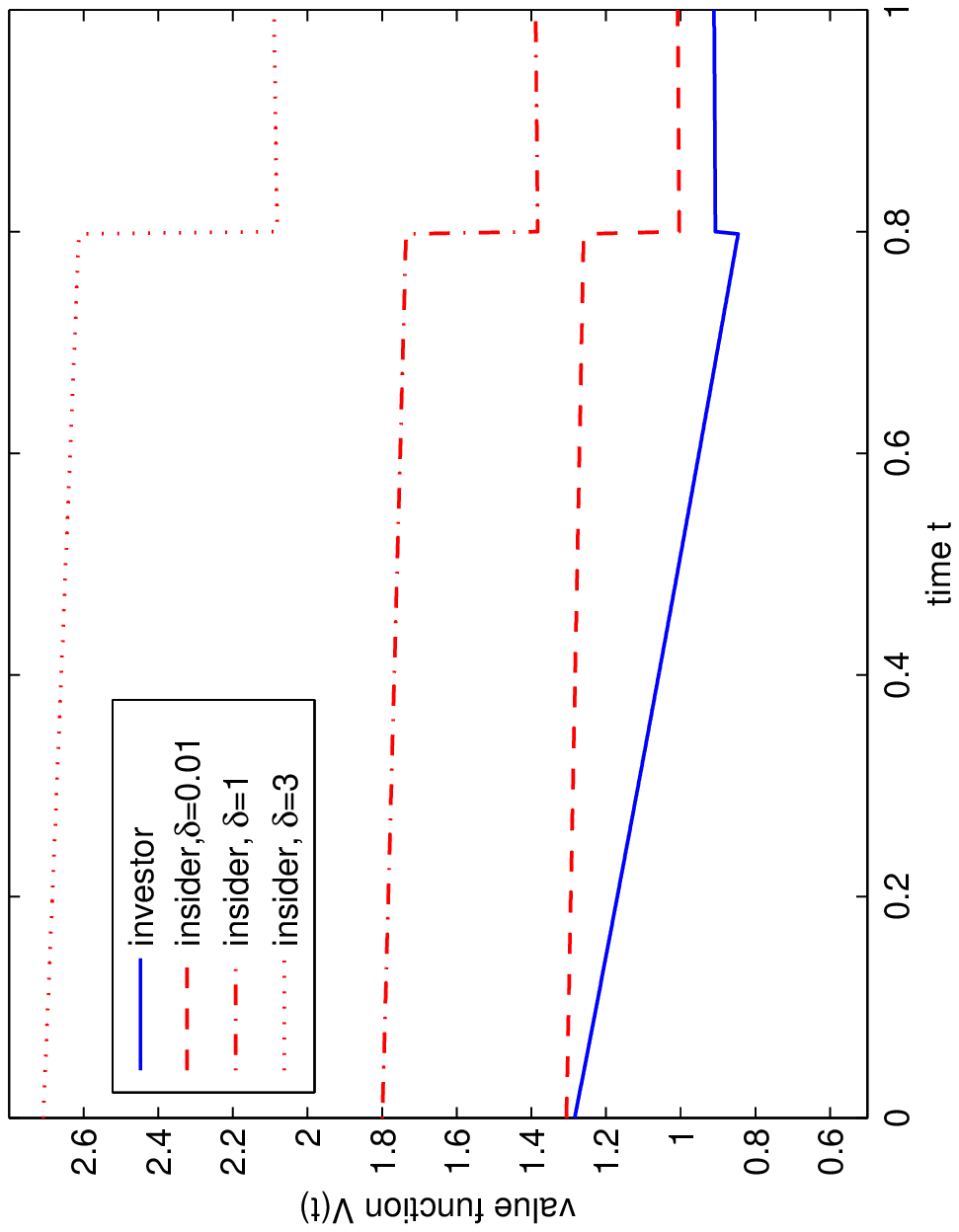,height=9cm,angle=270} 
\end{center}
\end{figure}

\end{document}